\renewcommand{\subparagraph}{}
\begin{document}

\title{Abstraction of Elementary Hybrid Systems by Variable Transformation}


\author{Jiang Liu\inst{1} \and Naijun Zhan\inst{2} \and Hengjun Zhao \inst{1}\and Liang Zou \inst{2}}
\institute{Chongqing Key Lab. of Automated Reasoning and Cognition, CIGIT, CAS \and
  State Key Lab. of Comput. Sci.,
  Institute of Software, CAS
}

\maketitle

\begin{abstract}
Elementary hybrid systems (EHSs) are those hybrid systems (HSs) containing elementary functions such as {\scriptsize \sf exp, ln, sin, cos}, etc.
EHSs are very common in practice, especially in safety-critical domains. Due to the non-polynomial expressions which lead to undecidable arithmetic, verification of EHSs is very hard. Existing approaches based on partition of state space or over-approximation of reachable sets suffer from state explosion or inflation of numerical errors. In this paper, we propose a symbolic abstraction approach that reduces EHSs to polynomial hybrid systems (PHSs), by replacing all non-polynomial terms with newly introduced variables. Thus the verification of EHSs is reduced to the one of PHSs, enabling us to apply all the well-established verification techniques and tools for PHSs to EHSs. In this way, it is possible to avoid the limitations of many existing methods. We illustrate the abstraction approach and its application in safety verification of EHSs by several real world examples.
\end{abstract}

\keywords{hybrid system, abstraction, elementary function, variable transformation, verification, invariant}

\section{Introduction}\label{sec:intro}
Complex Embedded Systems (CESs) consist of software and hardware components that operate autonomous devices interacting with the physical environment. They are now part of our daily life and are used in many industrial sectors 
to carry out highly complex and often critical functions.
The development process of CESs is widely recognized as a highly complex and challenging task. A thorough validation
and verification activity is necessary to enhance the quality of CESs and, in particular, to fulfill the quality criteria mandated by the relevant standards. Hybrid systems (HSs) are mathematical models with precise mathematical semantics for CESs, wherein continuous physical dynamics are combined with discrete transitions. Based on HSs, rigorous analysis and
verification of CESs become feasible, so that errors can be detected and corrected in the very early stage of design.

In practice, it is very common to model complex physical environments by ordinary differential equations (ODEs) with elementary functions such as reciprocal function $\frac{1}{x}$, exponential function $e^x$, logarithm function $\ln x$, trigonometric functions $\sin x$ and $\cos x$, and their compositions. We call such HSs elementary HSs (EHSs).
As elementary expressions usually lead to undecidable arithmetic, the verification of EHSs becomes very hard, even intractable.
Existing methods that deal with EHS verification include the level-set method \cite{MT00}, the hybridization method \cite{ETA07,JGMD12}, the gridding-based abstraction refinement method \cite{RS07}, the interval SMT solver-based method \cite{GKC13-dreach,iSAT-ODE}, the Taylor model-based flowpipe approximation method \cite{CAS12}, and so on. These methods rely either on iterative partition of state space or on iterative computation of approximate reachable sets, which can quickly lead to explosion of state numbers or inflation of numerical errors.
Moreover, most of the above mentioned methods can only do bounded model checking (BMC).

As an alternative, the constraint-based approach verifies the safety property of a HS by solving corresponding constraints symbolically or numerically, to discover a barrier (inductive invariant) that separates the reachable set from the unsafe region, which avoids exhaustive gridding or brute-force computation, and can thus overcome the limitations of the above mentioned methods. However, this method has mainly been applied to verification of polynomial hybrid systems (PHSs) \cite{SSM04a,PJP07,PC08,GT08,LZZ11}.
Although ideas about generating invariants for EHSs appeared in \cite{PC08,GP14}, they were talked about in an ad hoc way.
In \cite{Sankaranarayanan12}, the author proposed a change-of-bases method to transform EHSs to PHSs, even to linear systems, but the success depends on the choice of the set of basis functions, and therefore does not apply to general EHSs.

In this paper, we investigate symbolic abstraction of general EHSs to PHSs, by extending \cite{Sankaranarayanan12} with early works on polynomilization of elementary ODEs \cite{Kerner81,SV87}. Herein the abstraction is accomplished by introducing new variables to replace the non-polynomial terms. With the substitution, flows, guards and other components of the EHSs are transformed according to the chain rule of differentiation, or by the over-approximation methods proposed in the paper, so that for any trajectory of the EHSs, there always exists a corresponding trajectory of the reduced PHSs. Besides, such abstraction preserves (inductive) invariant sets. Therefore, verification of the EHSs is naturally reduced to the one of the reduced PHSs. This will be shown by several real world verification problems.

The proposed abstraction applies to general EHSs.
The benefit of the proposed abstraction is that it enables all the well-established verification techniques and tools for PHSs, especially the constraint-based approaches such as DAL \cite{Platzer10} and SOS \cite{PJP07,KHSH13}, to be applied to EHSs, and thus provides the possibility of avoiding such limitations as error inflation, state explosion and boundedness for existing EHS verification methods.
A by-product is that it also provides the possibility of generating invariants with elementary functions for PHSs, thus enhancing the power
of existing PHS verification methods.
In short, the proposed abstraction method can be a good alternative or complement to existing approaches.


\paragraph*{Related Work.} This work is most closely related to \cite{Sankaranarayanan12} and \cite{Kerner81}. The abstraction in this paper is performed by systematic augmentation of the original system rather than change-of-bases, thus essentially different from \cite{Sankaranarayanan12} and generally applicable. Compared to \cite{Kerner81}, this paper gives a clearer reduction procedure for elementary ODEs and discusses the extension to hybrid systems. This work is most closely related to \cite{Sankaranarayanan12} and \cite{Kerner81}. The abstraction in this paper is performed by systematic augmentation of the original system rather than change-of-bases, thus essentially different from \cite{Sankaranarayanan12} and more general. Compared to \cite{Kerner81}, this paper gives a clearer reduction procedure for elementary ODEs and discusses the extension to hybrid systems.
It was proved in \cite{Ratschan14} that safety verification of nonlinear hybrid systems is quasi-semidecidable, but to find efficient verification algorithms remains an open problem. An approximation technique for abstracting nonlinear hybrid systems to PHSs based on Taylor polynomial was proposed in \cite{LT07}, but to abstract the continuous flow transitions it requires the ODEs to have closed-form solutions. In \cite{PP05}, the authors adopted similar recasting techniques to ours for stability analysis of non-polynomial systems. Regarding non-polynomial invariants for polynomial continuous or hybrid systems, \cite{RMM12} presented the first method for generating transcendental invariants using formal power series, while the more recent work \cite{GJPS14} proposed a Darboux Polynomial-based method. Both \cite{RMM12} and \cite{GJPS14} can only find non-polynomial invariants of limited forms.

\paragraph*{Paper Organization.} The rest of the paper is organized as follows. We briefly review some basic notions about hybrid systems and the theory of abstraction for hybrid systems in Section~\ref{sec:preli}. Section \ref{sec:abstraction} is devoted to the transformation from EDSs to PDSs, and from EHSs to PHSs. Section \ref{sec:application} discusses how to use the proposed abstraction approach for safety verification of EHSs. Section \ref{sec:conclusion} concludes this paper.

\section{Preliminary}\label{sec:preli}
In this section, we briefly introduce the basic knowledge of hybrid systems and define what we call \emph{elementary hybrid systems}. Besides, we also recall the basic theory of abstraction for hybrid systems originally developed in \cite{Sankaranarayanan11,Sankaranarayanan12}.

Throughout this paper, we use $\mathbb N,\mathbb Q,\mathbb R$ to denote the set of \emph{natural}, \emph{rational} and \emph{real} numbers respectively. Given a set $A$, the power set of $A$ is denoted by $2^A$, and the Cartesian product of $n$ duplicates of $A$ is denoted by $A^n$; for instance, $\mathbb R^n$ stands for the $n$-dimensional Euclidean space. A vector element $(a_1,a_2,\ldots,a_n)\in A^n$ is usually abbreviated by a boldface letter $\mathbf a$ when its dimension is clear from the context.

\subsection{Elementary Continuous and Hybrid Systems}
A continuous dynamical system (CDS) is modeled by first-order autonomous ordinary differential equations (ODEs)
\begin{equation}\label{eqn:ode}
  \dot \xx= \fb(\xx),
\end{equation}
where $\xx=(x_1,\ldots,x_n)\in\mathbb{R}^n$ and $\fb: U\rightarrow \mathbb R^n$ is a vector function, called a vector field, defined on an open set $U\subseteq \mathbb R^n$.
If $\fb$ satisfies the \emph{local Lipschitz condition} \cite{Khalil01}, then for any $\xx_0\in U$, there exists a unique {differentiable} vector function $\xx(t): (a,b)\rightarrow U$, where $(a, b)$ is an open interval containing $0$, such that $\xx(0)=\xx_0$ and the derivative of $\xx(t)$ w.r.t. $t$ satisfies $\forall t\in (a,b).\,{\ud \xx(t)\over \ud t} = \fb (\xx(t))$. Such $\xx(t)$ is called the \emph{solution} to (\ref{eqn:ode}) with initial value $\xx_0$, or the \emph{trajectory} of (\ref{eqn:ode}) starting from $\xx_0$.

In many contexts, a CDS $\mathcal{C}$ may be equipped with an initial set $\Xi$ and a domain ${D}$, represented as a triple $\mathcal C\,\define\,(\Xi, \fb, {D})$.\footnote{In this paper, the symbol $\,\define\,$ is interpreted as ``defined as".} If $\fb$ is defined on $U\subseteq \mathbb R^n$, then $\Xi$ and $D$ should satisfy $\Xi\subseteq D\subseteq U$. In what follows, all CDSs will refer to the triple form unless otherwise stated.
Hybrid systems (HSs) are those systems that exhibit both continuous evolutions and discrete transitions. A
popular model of HSs is \emph{hybrid automata} \cite{ACHH93,Henzinger96}.

\begin{definition}[Hybrid Automaton]\label{dfn:HS}
A {hybrid automaton} (HA) is a system $\mathcal{H} \, \define\, (Q, X, f,$
$D, E, G, R, \Xi)$,\, where
\begin{itemize}
  \item $Q=\{q_1,\ldots,q_m\}$ is a finite set of modes;
  \item $X=\{x_1,\ldots,x_n\}$ is a finite set of continuous state variables, with $\xx=(x_1,\ldots,x_n)$ ranging over $\mathbb R^n$;
  \item $f: Q\rightarrow (U_q\rightarrow \mathbb R^n)$ assigns to each mode $q\in Q$ a locally Lipschitz continuous vector field $\fb_q$ defined on the open set $U_q\subseteq \mathbb R^n$;
  \item $D$ assigns to each mode $q\in Q$ a domain $D_q\subseteq U_q$;
  \item $E\subseteq Q\times Q$ is a finite set of discrete transitions;
  \item $G$ assigns to each transition $e\in E$ a guard $G_e\subseteq \mathbb R^n$;
  \item $R$ assigns to each  transition $e\in E$ a set-valued reset function $R_e$: $G_e\rightarrow 2^{\mathbb R^n}$;
  \item $\Xi$ assigns to each $q\in Q$ a set of initial states $\Xi_q\subseteq D_q$.
\end{itemize}
\end{definition}

Actually a HA can be regarded as a composition of a finite set of CDSs $\mathcal C_q\,\define\,(\Xi_q,\fb_q,$ $D_q)$ for $q\in Q$, together with the set of transition relations specified by $(G_e,R_e)$ for $e\in E$. Conversely, any CDS can be regarded as a special HA with a single mode and without discrete transitions.

In this paper, we consider the class of HSs that can be defined by
multivariate \emph{elementary} functions given by the following grammar:
\begin{eqnarray}
  f,g &::=& c\mid x \mid f+g \mid f-g\mid f \times g\mid \label{eqn:ele-poly} \\
  & & \frac{f}{g}\mid f^{a}\mid e^f\mid \ln(f)\mid \sin(f)\mid \cos(f) \enspace ,\label{eqn:ele-trans}
\end{eqnarray}
where $c\in\mathbb R$ is any real constant, $a\in\mathbb Q$ is any rational constant, and $x$ can be any variable from the set of real-valued variables $\{x_1, \ldots$, $x_n\}$.
In particular, the set of functions constructed only by (\ref{eqn:ele-poly}) are multivariate {\emph{polynomials}} in $x_1,x_2,\ldots,x_n$.

\begin{definition}[Elementary and Polynomial HSs]\label{dfn:ele-hs}
A HS or CDS is called \emph{elementary} (resp. \emph{polynomial})
if it can be expressed by \emph{elementary} (resp. \emph{polynomial}) functions together with relational symbols $\geq,>,\leq,<,=,\neq$ and Boolean connectives $\wedge,\vee,\neg,$ $\longrightarrow,\longleftrightarrow$.
\end{definition}

Elementary (resp. {polynomial}) HSs or CDSs will be denoted by EHSs or EDSs (resp. PHSs or PDSs) for short.

\begin{remark}
 The limitation of elementary functions to grammar (\ref{eqn:ele-poly}) and (\ref{eqn:ele-trans}) is not essential. For example, tangent and cotangent functions $\tan(f),\cot(f)$ can be easily defined. Besides, the presented approach in this paper is also applicable to other elementary functions not mentioned above, such as inverse trigonometric functions $\arcsin(f),$ $\arccos(f)$, etc. However, it does exclude functions like
 \begin{displaymath}
f(x) = \left\{ \begin{array}{ll}
\frac{\sin x}{x} & \textrm{if $x\neq 0$}\\
\,\,\,\,\scriptstyle{1} & \textrm{if $x=0$}\\
\end{array} \right..
\end{displaymath}
\end{remark}

\subsection{Semantics of Hybrid Systems}
Given a HA $\mathcal H$, denote the state space of $\mathcal H$ by $\mathbb H\,\define\,Q\times\mathbb R^n$, the domain of $\mathcal H$ by $D_{\mathcal H}\,\define\,\bigcup_{q\in Q} (\{q\}\times {D}_q)$, and the set of all initial states by $\Xi_{\mathcal H}\,\define\,\bigcup_{q\in Q} (\{q\}\times \Xi_q)$.
The semantics of $\mathcal H$ can be characterized by the set of \emph{reachable} states of $\mathcal H$.

\begin{definition}[Reachable Set] \label{dfn:RS}
Given a HA $\mathcal H$, the \emph{reachable set} of $\mathcal H$, denoted by $\mathcal R_{\mathcal H}$, consists of such $(q,\xx)\in \mathbb H$\, for which there exists a finite sequence
$$(q_0,\xx_0),(q_1,\xx_1),\ldots,(q_l,\xx_l)$$
such that $(q_0,\xx_0)\in \Xi_{\mathcal H}$, $(q_l,\xx_l)=(q,\xx)$, and for any $0\leq i\leq l-1$, one of the following two conditions holds:
\begin{itemize}
  \item (Discrete Jump): $e=(q_i,q_{i+1})\in E$, \,$\xx_i\in G_e$ and $\xx_{i+1}\in R_e(\xx_i)$; or
  \item (Continuous Evolution): $q_i=q_{i+1}$, and there exists a $\delta\geq 0$ s.t. the trajectory $\xx(t)$ of $\dot \xx=\fb_{q_i}$ starting from $\xx_i$ satisfies
      \begin{itemize}
         \item $\xx(t)\in D_{q_i}$ for all $t\in [0,\delta]$; and
         \item $\xx(\delta)=\xx_{i+1}$\,.
      \end{itemize}
\end{itemize}
\end{definition}

Exact computation of reachable sets of hybrid systems is generally an intractable problem. For verification of safety properties, appropriate over-approximations of reachable sets will suffice.
\begin{definition}[Invariant]\label{dfn:inv}
Given a HA $\mathcal H$, a set $\mathcal I \,\define\,\bigcup_{q\in Q} (\{q\}\times I_q)\subseteq \mathbb H$ is called an \emph{invariant} of $\mathcal H$, if $\mathcal I$ is a superset of the reachable set $\mathcal R_{\mathcal H}$, i.e. $\mathcal R_{\mathcal H}\subseteq \mathcal I$.
\end{definition}

\begin{definition}[Inductive Invariant]\label{dfn:induc-inv}
Given a HA $\mathcal H$, a set $\mathcal I \,\define\,\bigcup_{q\in Q} (\{q\}\times I_q)\subseteq \mathbb H$ is called an \emph{inductive invariant} of $\mathcal H$, if $\mathcal I$ satisfies the following conditions:
\begin{itemize}
  \item $\Xi_q\subseteq I_{q}$ for all $q\in Q$;
  \item for any $e=(q,q')\in E$, if $\xx\in I_{q}\cap G_e$, then $R_e(\xx)\subseteq I_{q'}$;
  \item for any $q\in Q$ and any $\xx_0\in I_q$, if ${\xx}(t)$ is the trajectory of $\dot \xx=\fb_q$ starting from $\xx_0$, and there exists $T\geq 0$ s.t. ${\xx}(t)\in D_q$ for all $t\in [0,T]$, then ${\xx}(T)\in I_q$\,.
\end{itemize}
\end{definition}

It is easy to check that any inductive invariant is also an invariant.
%

\subsection{Abstraction of Hybrid Systems}
We next briefly introduce the kind of abstraction for HSs proposed in \cite{Sankaranarayanan11,Sankaranarayanan12} and the significant properties about such abstraction.

In what follows, to distinguish between the dimensions of a HS and its abstraction, we will annotate a HS $\mathcal{H}$ (a CDS $\mathcal{C}$) with the vector of its continuous state variables $\xx$ as $\mathcal{H}_{\xx}$ ($\mathcal{C}_{\xx}$). We use $|\xx|$ to denote the dimension of $\xx$.
Given a vector function $\Theta$ that maps from $D\subseteq \RR^{|\xx|}$ to $\RR^{|\yy|}$, let $\Theta(A)\,\define\,\{\Theta(\xx)\mid \xx \in A\}$ for any $A\subseteq D$, and $\Theta^{-1}(B)\,\define\,\{\xx\in D\mid \Theta(\xx)\in B \}$ for any $B\subseteq \mathbb R^{|\yy|}$.

\begin{definition}[Simulation \cite{Sankaranarayanan11}]\label{dfn:simu-cds}
Given two CDSs $\mathcal{C}_{\xx}\,\define\,$ $(\Xi_{\xx}, \fb_{\xx},D_{\xx})$ and $\mathcal{C}_{\yy} \,\define\, (\Xi_{\yy},\fb_{\yy}, $ $ D_{\yy})$, we say
$\mathcal C_{\yy}$ {\it simulates} $\mathcal C_{\xx}$ or $\mathcal C_{\xx}$ is simulated by $\mathcal C_{\yy}$ via a continuously differentiable mapping $\Theta:D_{\xx} \rightarrow \RR^{|\yy|}$, if $\Theta$ satisfies
\begin{itemize}
\item $\Theta(\Xi_{\xx})\subseteq \Xi_{\yy}$, $\Theta(D_{\xx})\subseteq D_{\yy}$; and
\item for any trajectory $\xx(t)$ of $\mathcal C_{\xx}$ (i.e. a trajectory of $\dot \xx = \fb_{\xx}(\xx)$ that starts from  $\Xi_{\xx}$ and stays in $D_{\xx}$), \,$\Theta\circ\xx(t)$ is a trajectory of $\mathcal C_{\yy}$, where $\circ$ denotes composition of functions.
\end{itemize}
We call $\mathcal C_{\yy}$ an {\em abstraction} of $\mathcal C_{\xx}$ under the {\em simulation map} $\Theta$.
\end{definition}

Abstraction of a HS can be obtained by abstracting the CDS corresponding to each mode using an individual simulation map. As argued in \cite{Sankaranarayanan12}, it can be assumed without loss of generality that the collection of simulation maps for each mode all map to an Euclidean space of the same dimension, say $\RR^{|\yy|}$.

\begin{definition}[Simulation \cite{Sankaranarayanan12}]\label{dfn:simu-HS}
Given two HSs $\mathcal{H}_{\xx}\,\define\, (Q,X,f_{\xx},D_{\xx},E,G_{\xx},R_{\xx},\Xi_{\xx})$ and
$\mathcal{H}_{\yy}\,\define\,(Q,Y,f_{\yy},D_{\yy},E,G_{\yy},$ $R_{\yy},\Xi_{\yy})$, we say $\mathcal H_{\yy}$ simulates $\mathcal H_{\xx}$ via the set of maps $\{\Theta_q: D_{\xx,q}\rightarrow \RR^{|\yy|}\mid q\in Q\}$, if the following hold:
\begin{itemize}
\item $(\Xi_{{\yy},q},\fb_{{\yy},q}, D_{{\yy},q})$ simulates $(\Xi_{{\xx},q}, \fb_{{\xx},q}, D_{{\xx},q})$ via $\Theta_q$, for each $q\in Q$;
\item $\Theta_q( G_{{\xx},e})\subseteq  G_{{\yy},e}$, for any $e=(q,q')\in E$;
\item $\Theta_{q'}( R_{{\xx},e}({\xx}))\subseteq R_{{\yy},e}(\Theta_q({\xx}))$, for any $e=(q,q')\in E$ and any ${\xx}\in G_{\xx,e}$.
\end{itemize}
We call $\mathcal H_{\yy}$ an {\em abstraction} of $\mathcal H_{\xx}$ under the set of {\em simulation maps} $\{\Theta_q\mid q\in Q\}$.
\end{definition}

Intuitively, if $\mathcal H_{\yy}$ is an abstraction of $\mathcal H_{\xx}$, then for any $(q,\xx)$ reachable by $\mathcal H_{\xx}$, $(q,\Theta_q(\xx))$ is a state reachable by  $\mathcal H_{\yy}$. Actually, we can prove the following nice property about such abstractions.

\begin{theorem}[Invariant Preserving Property]\label{thm:SimuInvHa}
If $\mathcal H_{\yy}$ is an {abstraction} of $\mathcal H_{\xx}$ under {simulation maps} $\{\Theta_q\mid q\in Q\}$, and $\mathcal I_{\yy}\,\define\, \bigcup_{q\in Q} (\{q\}\times I_{\yy,q})$ is an invariant (resp. inductive invariant) of $\mathcal{H}_{\yy}$, then
$\mathcal I_{\xx}\,\define\,\bigcup_{q\in Q} (\{q\}\times I_{\xx,q})$ with $I_{\xx,q}\,\define\,\Theta_{q}^{-1}(I_{\yy,q})$
is an invariant (resp. inductive invariant) of $\mathcal{H}_{\xx}$.
\end{theorem}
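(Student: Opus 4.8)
The plan is to prove the two cases --- ordinary invariant and inductive invariant --- essentially independently, since the former follows from a reachability-preservation argument while the latter follows by checking the three defining closure conditions of Definition~\ref{dfn:induc-inv} directly. Throughout, I write $\Theta_q$ for the simulation maps and set $I_{\xx,q}\,\define\,\Theta_q^{-1}(I_{\yy,q})$, $\mathcal I_{\xx}\,\define\,\bigcup_{q\in Q}(\{q\}\times I_{\xx,q})$.

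\textbf{The invariant case.} Here the key claim is the one anticipated in the remark before the theorem: if $(q,\xx)$ is reachable by $\mathcal H_{\xx}$, then $(q,\Theta_q(\xx))$ is reachable by $\mathcal H_{\yy}$. First I would fix a witnessing sequence $(q_0,\xx_0),\ldots,(q_l,\xx_l)$ for $(q,\xx)\in\mathcal R_{\mathcal H_{\xx}}$ as in Definition~\ref{dfn:RS}, and show by induction on $l$ that $(q_0,\Theta_{q_0}(\xx_0)),\ldots,(q_l,\Theta_{q_l}(\xx_l))$ is a witnessing sequence for $\mathcal H_{\yy}$. The base case uses $\Theta_{q_0}(\Xi_{\xx,q_0})\subseteq\Xi_{\yy,q_0}$. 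For the inductive step I distinguish the two transition types: for a discrete jump $e=(q_i,q_{i+1})$ with $\xx_i\in G_{\xx,e}$ and $\xx_{i+1}\in R_{\xx,e}(\xx_i)$, the conditions $\Theta_{q_i}(G_{\xx,e})\subseteq G_{\yy,e}$ and $\Theta_{q_{i+1}}(R_{\xx,e}(\xx_i))\subseteq R_{\yy,e}(\Theta_{q_i}(\xx_i))$ from Definition~\ref{dfn:simu-HS} give exactly what is needed; for a continuous evolution with trajectory $\xx(t)$ of $\dot\xx=\fb_{\xx,q_i}$ on $[0,\delta]$ staying in $D_{\xx,q_i}$, the simulation property of the CDSs (Definition~\ref{dfn:simu-cds}) tells us $\Theta_{q_i}\circ\xx(t)$ is a trajectory of $\dot\yy=\fb_{\yy,q_i}$ that stays in $D_{\yy,q_i}$, so $\Theta_{q_i}(\xx_i)$ evolves to $\Theta_{q_i}(\xx_{i+1})$ in the abstraction. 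This establishes $\Theta_q(\mathcal R_{\mathcal H_{\xx}})\subseteq\mathcal R_{\mathcal H_{\yy}}$ fiberwise, i.e. $\Theta_q(R_{\mathcal H_{\xx},q})\subseteq I_{\yy,q}$ where I abuse notation for the $q$-fiber of the reachable set. Taking preimages and using the elementary set-theoretic fact $A\subseteq\Theta^{-1}(\Theta(A))$ yields $R_{\mathcal H_{\xx},q}\subseteq\Theta_q^{-1}(I_{\yy,q})=I_{\xx,q}$, hence $\mathcal R_{\mathcal H_{\xx}}\subseteq\mathcal I_{\xx}$.

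\textbf{The inductive invariant case.} Here I verify the three bullets of Definition~\ref{dfn:induc-inv} for $\mathcal I_{\xx}$. (i) From $\Theta_q(\Xi_{\xx,q})\subseteq\Xi_{\yy,q}\subseteq I_{\yy,q}$ we get $\Xi_{\xx,q}\subseteq\Theta_q^{-1}(I_{\yy,q})=I_{\xx,q}$. (ii) Fix $e=(q,q')\in E$ and $\xx\in I_{\xx,q}\cap G_{\xx,e}$; then $\Theta_q(\xx)\in I_{\yy,q}$ and $\Theta_q(\xx)\in\Theta_q(G_{\xx,e})\subseteq G_{\yy,e}$, so by inductiveness of $\mathcal I_{\yy}$ we have $R_{\yy,e}(\Theta_q(\xx))\subseteq I_{\yy,q'}$; combined with $\Theta_{q'}(R_{\xx,e}(\xx))\subseteq R_{\yy,e}(\Theta_q(\xx))$ this gives $\Theta_{q'}(R_{\xx,e}(\xx))\subseteq I_{\yy,q'}$, hence $R_{\xx,e}(\xx)\subseteq I_{\xx,q'}$. (iii) Fix $q$, $\xx_0\in I_{\xx,q}$, a trajectory $\xx(t)$ of $\dot\xx=\fb_{\xx,q}$ with $\xx(t)\in D_{\xx,q}$ on $[0,T]$. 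Since $\Theta_q(\xx_0)\in I_{\yy,q}$ and, by Definition~\ref{dfn:simu-cds}, $\yy(t)\,\define\,\Theta_q\circ\xx(t)$ is a trajectory of $\dot\yy=\fb_{\yy,q}$ with $\yy(t)\in\Theta_q(D_{\xx,q})\subseteq D_{\yy,q}$ on $[0,T]$, inductiveness of $\mathcal I_{\yy}$ gives $\yy(T)\in I_{\yy,q}$, i.e. $\Theta_q(\xx(T))\in I_{\yy,q}$, hence $\xx(T)\in I_{\xx,q}$. One subtlety to handle carefully in bullet (iii): Definition~\ref{dfn:simu-cds} only asserts that the image of a trajectory \emph{defined on $D_{\xx}$} is a trajectory of the abstraction, so I should note that the restriction of $\xx(t)$ to $[0,T]$ does stay in $D_{\xx,q}$ and therefore qualifies; this is the only place where the domain-containment hypothesis $\Theta_q(D_{\xx,q})\subseteq D_{\yy,q}$ and the ``stays in $D$'' clause interact, and it is where I expect the main (though mild) obstacle to lie --- everything else is a mechanical unwinding of the definitions.
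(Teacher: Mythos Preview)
The paper does not actually prove this theorem; it remarks that the proof ``can be given in a similar way'' to Theorem~3.2 of \cite{Sankaranarayanan11} (the CDS case) and omits it. Your plan is precisely the natural argument one would write down, and it is correct in substance for both the invariant and the inductive-invariant cases.

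One technical point you should tighten: Definition~\ref{dfn:simu-cds} literally only guarantees that $\Theta_q\circ\xx(t)$ is a trajectory of $\mathcal C_{\yy}$ when $\xx(t)$ \emph{starts from $\Xi_{\xx}$} and stays in $D_{\xx}$, not merely when it stays in $D_{\xx}$. In your invariant-case induction, a continuous evolution taken after a discrete jump need not start from the initial set of that mode; and in your inductive-invariant case~(iii) you take $\xx_0\in I_{\xx,q}$, not $\xx_0\in\Xi_{\xx,q}$. So the subtlety you flag at the end is real but mislocated: it is the \emph{initial-set} restriction in Definition~\ref{dfn:simu-cds}, not the domain restriction, that requires care. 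In the paper's setting this is harmless because the simulations are always constructed via Theorem~\ref{thm:sim-check}, whose Jacobian condition $\fb_{\yy}(\Theta(\xx))=\mathcal J_{\Theta}(\xx)\cdot\fb_{\xx}(\xx)$ holds for all $\xx\in D_{\xx}$ and therefore sends \emph{any} solution of $\dot\xx=\fb_{\xx}$ lying in $D_{\xx}$ (regardless of starting point) to a solution of $\dot\yy=\fb_{\yy}$ lying in $D_{\yy}$. You should either invoke Theorem~\ref{thm:sim-check} at these two steps or state explicitly that Definition~\ref{dfn:simu-cds} is being read in this stronger sense.
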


Theorem \ref{thm:SimuInvHa} extends Theorem 3.2 of \cite{Sankaranarayanan11} in two aspects: firstly, it deals with HSs, and secondly, it applies to both invariants and inductive invariants; nevertheless, the proof of Theorem  \ref{thm:SimuInvHa} can be given in a similar way and so is omitted here.
The significance of Theorem \ref{thm:SimuInvHa} lies in the possibility of analyzing a complex HS by analyzing certain abstractions of it, which may be of simpler forms and thus allow the use of any available techniques and tools.

The following theorem proposed in \cite{Sankaranarayanan11} is very useful for checking or constructing simulation maps.
\begin{theorem}[Simulation Checking \cite{Sankaranarayanan11}]\label{thm:sim-check}
 Let $\mathcal C_{\xx}$, $\mathcal C_{\yy}$,
 $\Theta$ be specified as in Definition \ref{dfn:simu-cds}. Suppose $|\xx|=n,|\yy|=\ell$, and  $\Theta\,\define\,(\theta_1,\theta_2,\ldots, \theta_{\ell})$.
 Then $\mathcal C_{\yy}$ simulates $\mathcal C_{\xx}$ if
 \begin{itemize}
   \item  $\Theta(\Xi_{\xx})\subseteq \Xi_{\yy}$, $\Theta(D_{\xx})\subseteq D_{\yy}$; and
    \item $\fb_{\yy}(\Theta(\xx))=\mathcal J_{\Theta}(\xx) \cdot \fb_{\xx}(\xx)$, for any $\xx\in D_{\xx}$, where $\fb_{\xx}(\xx)$ is seen as a column vector, and $\mathcal J_{\Theta}(\xx)$ represents the \emph{Jacobian matrix} of $\Theta$ at point $\xx$, i.e.
      \begin{displaymath}
      \mathcal J_{\Theta}(\xx)=\left(
        \begin{array}{ccc}
          \frac{\partial{\theta_1}}{\partial x_1} & \ldots &  \frac{\partial{\theta_1}}{\partial x_n} \\
          \vdots & \ddots & \vdots \\
          \frac{\partial{\theta_{\ell}}}{\partial x_1} & \ldots & \frac{\partial{\theta_{\ell}}}{\partial x_n}  \\
        \end{array}
      \right)\enspace.
      \end{displaymath}
 \end{itemize}
\end{theorem}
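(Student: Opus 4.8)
The plan is to verify directly the two conditions in Definition~\ref{dfn:simu-cds}. The first condition, $\Theta(\Xi_{\xx})\subseteq\Xi_{\yy}$ and $\Theta(D_{\xx})\subseteq D_{\yy}$, is already one of the hypotheses, so nothing has to be done there. The real content is the trajectory-mapping condition: given any trajectory $\xx(t)$ of $\mathcal{C}_{\xx}$, we must show that $\yy(t)\define\Theta(\xx(t))$ is a trajectory of $\mathcal{C}_{\yy}$, i.e.\ that it is a solution of $\dot\yy=\fb_{\yy}(\yy)$ which starts in $\Xi_{\yy}$ and stays in $D_{\yy}$.

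First I would fix a trajectory $\xx(t)$ of $\mathcal{C}_{\xx}$ defined on some open interval $(a,b)\ni 0$, with $\xx(0)\in\Xi_{\xx}$ and $\xx(t)\in D_{\xx}$ for all $t\in(a,b)$, and set $\yy(t)\define\Theta(\xx(t))$. Since $\Theta$ is continuously differentiable on $D_{\xx}$ and $\xx(t)$ is differentiable with values in $D_{\xx}$, the composite $\yy(t)$ is differentiable on $(a,b)$, and the chain rule gives $\dot\yy(t)=\mathcal{J}_{\Theta}(\xx(t))\cdot\dot\xx(t)$. Substituting the differential equation $\dot\xx(t)=\fb_{\xx}(\xx(t))$ and then invoking the hypothesis $\fb_{\yy}(\Theta(\xx))=\mathcal{J}_{\Theta}(\xx)\cdot\fb_{\xx}(\xx)$ at the point $\xx=\xx(t)\in D_{\xx}$, we get $\dot\yy(t)=\mathcal{J}_{\Theta}(\xx(t))\cdot\fb_{\xx}(\xx(t))=\fb_{\yy}(\Theta(\xx(t)))=\fb_{\yy}(\yy(t))$, so $\yy$ solves the ODE of $\mathcal{C}_{\yy}$.

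It then remains to check the containment conditions: $\yy(0)=\Theta(\xx(0))\in\Theta(\Xi_{\xx})\subseteq\Xi_{\yy}$, and for each $t\in(a,b)$ we have $\yy(t)=\Theta(\xx(t))\in\Theta(D_{\xx})\subseteq D_{\yy}$, both immediate from the first hypothesis. Hence $\yy(t)$ is a trajectory of $\mathcal{C}_{\yy}$, which is precisely the second bullet of Definition~\ref{dfn:simu-cds}, and the theorem follows.

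I do not expect a genuine obstacle here; the argument is essentially an application of the chain rule. The only points that warrant a little care are (i) that the $C^1$-regularity of $\Theta$ together with the differentiability of $\xx(t)$ is exactly what legitimizes the chain-rule step, and (ii) that the inclusion $\Theta(D_{\xx})\subseteq D_{\yy}$ is needed both to keep $\yy(t)$ inside the domain of $\mathcal{C}_{\yy}$ and to guarantee that the Jacobian identity $\fb_{\yy}(\Theta(\xx))=\mathcal{J}_{\Theta}(\xx)\cdot\fb_{\xx}(\xx)$ is only ever invoked at points $\xx$ where it is assumed to hold.
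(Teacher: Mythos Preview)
Your proof is correct and is the standard chain-rule argument. Note that the paper does not actually supply its own proof of this theorem: it is stated as a result taken from \cite{Sankaranarayanan11} and then used as a tool (e.g.\ in the proof of Theorem~\ref{thm:abs-eds-pds}), so there is no in-paper proof to compare against beyond observing that your argument is exactly the intended one.
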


We will employ this theorem to prove the correctness of our abstraction of EHSs in the following section.

\section{Polynomial Abstraction of EHSs}\label{sec:abstraction}
In this section, given any EHS as defined in Definition \ref{dfn:ele-hs}, we will construct a PHS that simulates the EHS in the sense of Definition \ref{dfn:simu-HS}. The process of constructing such an abstraction can be divided into three steps: firstly, elementary ODEs can be transformed into polynomial forms by introducing new variables to replace non-polynomial terms occurring in the vector field functions; secondly, using the replacement relations, initial sets and domains, and thus EDSs, can be abstracted into polynomial forms; finally, discrete transitions, i.e. guards and reset functions, can be abstracted accordingly, which results in polynomial abstractions of EHSs.

\subsection{Polynomialization of Elementary ODEs}\label{subsec:poly-ode}
In this part, we illustrate how to transform an elementary ODE $\dot \xx = \fb(\xx)$ equivalently into a polynomial one. The basic idea is to introduce a fresh variable $v$ for each non-polynomial term $\gamma(\xx)$ in $\fb(\xx)$ and then substitute $v$ for $\gamma(\xx)$ in $\fb(\xx)$; meanwhile differentiate the two sides of the replacement equation $v=\gamma(\xx)$ w.r.t. time and obtain a new ODE $\dot{v}=\nabla\gamma(\xx)\cdot \fb(\xx)$, where $\nabla\gamma(\xx)$ denotes the \emph{gradient} row vector of $\gamma(\xx)$; then append the new ODE to the original one (with $\gamma(\xx)$ replaced by $v$), and continue the above procedure to replace non-polynomial terms that may exist in $\nabla\gamma(\xx)$; finally when such a process terminates, a polynomial ODE together with a collection of replacement equations will be obtained. Note that the transformed polynomial ODE will always have a \emph{higher} dimension than the original one.

\begin{remark}
 Recasting elementary ODEs as polynomial ones has been proposed in early works {in the field of physics and biosciences} such as \cite{Kerner81,SV87} {in order to obtain explicit solutions of EDSs. In this paper, we employ such an idea for formal verification and invariant generation for EHSs.} The basic transformation here is similar to \cite{Kerner81}, but we give a clearer statement of the transformation procedure and extend it from ODEs to hybrid systems.
\end{remark}

We next demonstrate the above idea on concrete examples.

\subsubsection{Univariate Basic Elementary Functions}
For
\begin{equation}\label{eqn:ode-poly-simple}
  \dot x = f(x)
\end{equation}
\begin{itemize}
  \item if $f(x)=\frac{1}{x}$, then let $v=\frac{1}{x}$, and thus $\dot v = - \frac{\dot x}{x^2}$. Therefore (\ref{eqn:ode-poly-simple}) is transformed to
        \begin{displaymath}
            \left\{ \begin{array}{lll}
                \dot x & = & v\\
                \dot v & = & -v^3\\
         \end{array} \right.\enspace ;\footnote{By $v=\frac{1}{x}$, the set $\{(x,v)\mid x=0, v\in\mathbb R\}$ is excluded from the domain of the transformed polynomial ODE. Such consequence will not be explicitly mentioned in the rest of this paper.}
        \end{displaymath}

  \item if $f(x)=\sqrt{x}$, then let $v=\sqrt{x}$, and thus $\dot v = \frac{\dot x}{2\sqrt{x}}$. Therefore (\ref{eqn:ode-poly-simple}) is transformed to
        \begin{displaymath}
            \left\{ \begin{array}{lll}
                \dot x & = & v\\
                \dot v & = & \frac{1}{2}\\
         \end{array} \right.\enspace ;
        \end{displaymath}

  \item if $f(x)=e^{x}$, then let $v=e^x$, and thus $\dot v = e^x\cdot \dot x$. Therefore (\ref{eqn:ode-poly-simple}) is transformed to
          \begin{displaymath}
            \left\{ \begin{array}{lll}
                \dot x & = & v\\
                \dot v & = & v^2\\
         \end{array} \right.\enspace ;
        \end{displaymath}

  \item if $f(x)=\ln{x}$, then let $v=\ln{x}$, and thus $\dot v = \frac{\dot x}{x}$; then further let $u=\frac{1}{x}$, and thus $\dot u = - \frac{\dot x}{x^2}$. Therefore (\ref{eqn:ode-poly-simple}) is transformed to
      \begin{displaymath}
            \left\{ \begin{array}{lll}
                \dot x & = & v\\
                \dot v & = & uv\\
                \dot u & = & -u^2 v
         \end{array} \right.\enspace ;
        \end{displaymath}

  \item if $f(x)=\sin{x}$, then let $v=\sin{x}$, and thus $\dot v = \dot x\cdot \cos{x}$; then further let $u=\cos{x}$, and thus $\dot u = - \sin{x}\cdot \dot x$. Therefore (\ref{eqn:ode-poly-simple}) is transformed to
      \begin{displaymath}
            \left\{ \begin{array}{lll}
                \dot x & = & v\\
                \dot v & = & uv\\
                \dot u & = & -v^2
         \end{array} \right.\enspace ;
        \end{displaymath}

    \item if $f(x)=\cos{x}$, then the transformation is analogous to the case of $f(x)=\sin{x}$.
\end{itemize}

\subsubsection{Compositional and Multivariate Functions}
Obviously, the outmost form of any compositional elementary function must be one of $f\pm g, f \times g, \frac{f}{g}, f^{a}, e^f, \ln(f), \sin(f), \cos(f)$. Therefore given a compositional function, we can iterate the above procedure discussed on basic cases from the innermost non-polynomial sub-term to the outside, until all the sub-expressions have been transformed into polynomials. For example,
\begin{itemize}
  \item if $f(x)=\ln({2+\sin{x}})$, we can let
       \begin{displaymath}
            \left\{ \begin{array}{ll}
                v &  =\sin{x}\\
                u & = \cos{x}\\
                w &  =\ln{(2+v)}=\ln{(2+\sin{x})}\\
                z & =\frac{1}{2+v}=\frac{1}{2+\sin{x}}\\
         \end{array} \right.\enspace ,
        \end{displaymath}
\end{itemize}
and then (\ref{eqn:ode-poly-simple}) is transformed to
        \begin{displaymath}
            \left\{ \begin{array}{lll}
                \dot x & = & w\\
                \dot v & = & uw\\
                \dot u & = & -vw\\
                \dot w & = & zuw\\
                \dot z & = & -z^2uw
         \end{array} \right.\enspace .
        \end{displaymath}
Handling multivariate functions is straightforward.

In summary, we give the following assertion on polynomializing elementary ODES,
the correctness of which can be given based on the formal transformation algorithms
presented in the appendix.
\begin{proposition}[Polynomial Recasting]\label{prop:ode-polynomialize}
  Given an ODE $\dot \xx=\fb(\xx)$ with $\fb(\xx)$ an elementary vector function defined on an open set $U\subseteq\RR^n$, there exists a
  collection of variable replacement equations $\vv=\Gamma(\xx)$, where
  $\vv=(v_1,v_2,\ldots, v_m)$ is a vector of new variables and $\Gamma(\xx)=(\gamma_1(\xx),\gamma_2(\xx),\ldots,\gamma_m(\xx)):U\rightarrow \RR^m$ is an elementary vector function, such that
  \begin{eqnarray}
    {\left( \begin{array}{l}
                \dot \xx \\
                \dot \vv
         \end{array} \right) }  =  \left( \begin{array}{c}
                \fb(\xx) \\
                \mathcal J_{\Gamma}(\xx)\cdot  \fb(\xx)
         \end{array} \right)  & = &  \left( \begin{array}{c}
                \fb(\xx) \\
                \mathcal J_{\Gamma}(\xx)\cdot  \fb(\xx)
         \end{array} \right) {\Big\llbracket \vv/\Gamma(\xx)\Big\rrbracket} \nonumber \\
        & \widehat{=} & \tilde{\fb}(\xx,\vv) \label{eqn:ode-poly-define}
  \end{eqnarray}
  becomes a polynomial ODE, that is, $\tilde \fb(\xx,\vv)$ is a polynomial vector function in variables $\xx$ and $\vv$. Here
  {\sf {expr}}$\llbracket \vv/\Gamma(\xx) \rrbracket$ means replacing any occurrence of the non-polynomial term $\gamma_i(\xx)$ in the expression {\sf {expr}} by the corresponding variable $v_i$, for all $1\leq i\leq m$.
\end{proposition}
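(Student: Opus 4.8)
The plan is to prove Proposition~\ref{prop:ode-polynomialize} by induction on the structure of the elementary vector field $\fb$, essentially formalizing the iterative replacement procedure illustrated on the examples. First I would set up a suitable termination measure: to each elementary term $t$ built from grammar~(\ref{eqn:ele-poly})--(\ref{eqn:ele-trans}), assign a natural number $\mu(t)$ counting the number of \emph{non-polynomial} operator occurrences in $t$ (that is, occurrences of $\frac{\cdot}{\cdot}$, $(\cdot)^a$ with $a\notin\mathbb N$, $e^{(\cdot)}$, $\ln(\cdot)$, $\sin(\cdot)$, $\cos(\cdot)$), extended to a finite \emph{multiset} of terms by taking the multiset of the $\mu$-values. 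The claim is that each replacement step strictly decreases this measure in the multiset ordering: when we pick an \emph{innermost} non-polynomial subterm $\gamma(\xx)$ — innermost meaning all its proper subterms are polynomial — we introduce a fresh variable $v$ with $v=\gamma(\xx)$, substitute $v$ for every occurrence of $\gamma(\xx)$ throughout the current system, and append the equation $\dot v = \nabla\gamma(\xx)\cdot\fb(\xx)$. Because $\gamma$ is built by a \emph{single} non-polynomial operator applied to polynomials, its gradient $\nabla\gamma$ consists of terms each of strictly smaller $\mu$-value than $\gamma$ (e.g. $\gamma=\ln(p)$ gives $\nabla\gamma = (1/p)\,\nabla p$, whose only non-polynomial operator is one reciprocal, versus the logarithm; $\gamma=\sin(p)$ needs $\cos(p)$ with $\mu=1$, etc.), and these smaller terms are then themselves eliminated in subsequent steps. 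One must be slightly careful with the two-operator cases ($\ln$ introduces an auxiliary $1/x$, $\sin$ introduces $\cos$, $\sqrt{\cdot}$ introduces $1/\sqrt{\cdot}$), but in each such case the auxiliary term still has $\mu$-value strictly below that of the term being removed, so the multiset measure still drops. Hence the procedure terminates after finitely many steps, say $m$ of them, yielding the vector $\vv=(v_1,\dots,v_m)$ and $\Gamma=(\gamma_1,\dots,\gamma_m)$.

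Next I would verify the correctness of the output, i.e. that the resulting system~(\ref{eqn:ode-poly-define}) is exactly of the stated shape and is polynomial. That $\dot\vv = \mathcal J_\Gamma(\xx)\cdot\fb(\xx)$ holds is immediate from the chain rule: row $i$ of $\mathcal J_\Gamma$ is $\nabla\gamma_i$, and differentiating $v_i=\gamma_i(\xx)$ along $\dot\xx=\fb(\xx)$ gives precisely $\dot v_i = \nabla\gamma_i(\xx)\cdot\fb(\xx)$. Applying the simultaneous substitution $\llbracket\vv/\Gamma(\xx)\rrbracket$ is then a no-op on any expression that is already polynomial in $(\xx,\vv)$, which is why the middle equality in~(\ref{eqn:ode-poly-define}) is trivial once we know the right-hand side is polynomial. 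To see that $\tilde\fb(\xx,\vv)$ is polynomial: every non-polynomial subterm that ever appears in $\fb$ or in any gradient computed along the way is, by the termination argument, eventually selected as an innermost non-polynomial subterm and replaced by a fresh variable; when the process halts there are no non-polynomial operator occurrences left anywhere in the system, so each component of $\tilde\fb$ is a polynomial in $x_1,\dots,x_n,v_1,\dots,v_m$. The domain remark (footnote) is handled by noting that each replacement $v=1/p$ etc. restricts to the open subset of $U$ where the corresponding denominator/argument is in the domain of the operator, and a finite intersection of such open sets is open; this is a side condition rather than part of the algebraic identity.

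The main obstacle I anticipate is the termination argument — making precise why the replacement process cannot loop, given that eliminating one transcendental term (e.g. $\ln$ or $\sin$) \emph{introduces} a new one (the reciprocal $1/p$, or the partner $\cos p$), which itself must then be eliminated. The resolution is exactly the stratification by the multiset of $\mu$-values together with the discipline of always attacking an \emph{innermost} non-polynomial subterm, so that the newly introduced terms are always strictly lower in the well-founded multiset order; but spelling this out cleanly requires enumerating the finitely many operator cases and checking the $\mu$-bookkeeping for each (the reciprocal/$\sqrt{}$/log/trig cases each behave slightly differently). A secondary, more bookkeeping-heavy point is that substitution must be performed \emph{simultaneously} and \emph{consistently}: distinct occurrences of the same syntactic non-polynomial term must be assigned the same fresh variable (otherwise the dimension blows up and, more importantly, the chain-rule identity for $\dot\vv$ would not close into a polynomial system). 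Both points are genuinely routine once the measure is fixed, which is why the proposition defers the full details to the transformation algorithms in the appendix; the proof sketch above is meant to justify that those algorithms terminate and produce output of the claimed form.
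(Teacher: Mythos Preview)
Your overall strategy---iterate the innermost-replacement procedure, then invoke the chain rule to see that $\dot v_i=\nabla\gamma_i\cdot\fb$---matches the paper's, which defers the details to the algorithms in the appendix and argues termination by bounding the number of fresh variables (``at most triple the number of non-polynomial terms'').

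However, your specific termination argument has a gap. You claim that for each innermost non-polynomial term $\gamma$ with $\mu(\gamma)=1$, the auxiliary term introduced by differentiation has $\mu$-value \emph{strictly below} that of $\gamma$. This is false for the trigonometric pair: if $\gamma=\sin(p)$ with $p$ polynomial, then $\mu(\gamma)=1$, and the auxiliary term is $\cos(p)$, which also has $\mu(\cos p)=1$. So the multiset of $\mu$-values does not strictly decrease at this step, and the well-foundedness argument breaks down. (The same happens for $\cos$ introducing $\sin$.) The process does terminate, of course---processing $\cos(p)$ next yields $-\sin(p)$, which has \emph{already} been assigned the variable $v$, so no further term is created---but your measure $\mu$ does not see this.

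The paper's fix is simpler than a multiset ordering: observe that every auxiliary term introduced by differentiation has the \emph{same polynomial inner argument} $p$ as the term being eliminated, and for each such $p$ only a bounded number of operator types can appear ($1/p$, $p^{1/n}$, $e^p$, $\ln p$, $\sin p$, $\cos p$). Hence the total number of distinct replacement terms is bounded by a constant times the number of non-polynomial subterms of the original $\fb$, and since we never introduce the same term twice (your ``consistency'' remark), the process halts. If you prefer your multiset framework, you would need to refine the measure---e.g.\ track the \emph{set} of non-polynomial terms not yet assigned a variable, which is a subset of a fixed finite set and strictly shrinks at each step---rather than just the count of non-polynomial operators.
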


It can be proved that the number of variables $\vv$ is at most triple the number of nonpolynomial
terms in the original ODE, which can be a small number in practice.
The transformed polynomial ODE as specified in Proposition \ref{prop:ode-polynomialize} is \emph{equivalent} to the original one in the following sense.

\begin{theorem}[Trajectory Equivalence]\label{thm:ode-equiv}
 Let $\fb(\xx)$, $\Gamma(\xx)$ and $\tilde \fb (\xx,\vv)$ be as specified in Proposition \ref{prop:ode-polynomialize}. Then for any trajectory $\xx(t)$ of $\dot\xx=\fb(\xx)$ starting from $\xx_0\in U\subseteq \RR^n$, $\big(\xx(t),\Gamma(\xx(t))\big)$ is the trajectory of $(\dot \xx,\dot \vv)=\tilde \fb(\xx,\vv)$ starting from $(\xx_0,\Gamma(\xx_0))$; conversely, for any trajectory $(\xx(t),\vv(t))$ of $(\dot \xx,\dot \vv)=\tilde \fb(\xx,\vv)$ starting from $(\xx_0,\vv_0)\in \RR^{n+m}$, if $\xx_0\in U$ and $\vv_0=\Gamma(\xx_0)$, then $\xx(t)$ is the trajectory of $\dot \xx=\fb(\xx)$ starting from $\xx_0$.
\end{theorem}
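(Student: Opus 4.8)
The plan is to prove the two directions of the stated equivalence separately and, in each, to reduce everything to the chain rule, the defining property of $\tilde\fb$ from Proposition~\ref{prop:ode-polynomialize}, and uniqueness of solutions for the polynomial --- hence locally Lipschitz --- field $\tilde\fb$. The fact I would isolate first is the pointwise identity
\[
  \tilde\fb\bigl(\xx,\Gamma(\xx)\bigr)=\left(\begin{array}{c}\fb(\xx)\\ \mathcal J_{\Gamma}(\xx)\cdot\fb(\xx)\end{array}\right)\qquad\text{for all }\xx\in U,
\]
which is immediate from the construction of $\tilde\fb$: the operator $\llbracket\vv/\Gamma(\xx)\rrbracket$ only renames each non-polynomial subterm $\gamma_i(\xx)$ as the fresh variable $v_i$, so resubstituting $v_i=\gamma_i(\xx)$ restores the original right-hand side verbatim.

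For the forward direction I would take a trajectory $\xx(t)$ of $\dot\xx=\fb(\xx)$ with $\xx(0)=\xx_0$, on its maximal interval $(a,b)$ (so $\xx(t)\in U$ for all $t\in(a,b)$). Since $\Gamma$ is elementary and therefore $C^1$ on $U$, the curve $t\mapsto\Gamma(\xx(t))$ is differentiable and, by the chain rule, $\frac{\ud}{\ud t}\Gamma(\xx(t))=\mathcal J_{\Gamma}(\xx(t))\cdot\fb(\xx(t))$. Combining this with $\dot\xx(t)=\fb(\xx(t))$ and the identity above shows that $\bigl(\xx(t),\Gamma(\xx(t))\bigr)$ solves $(\dot\xx,\dot\vv)=\tilde\fb(\xx,\vv)$ with initial value $(\xx_0,\Gamma(\xx_0))$; by uniqueness it is the trajectory of the polynomial system from that point, restricted to $(a,b)$.

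For the converse I would not argue directly but route through the forward direction. Given a trajectory $(\xx(t),\vv(t))$ of $(\dot\xx,\dot\vv)=\tilde\fb(\xx,\vv)$ with $\xx(0)=\xx_0\in U$ and $\vv(0)=\Gamma(\xx_0)$, let $\tilde\xx(t)$ be the trajectory of $\dot\xx=\fb(\xx)$ from $\xx_0$, on its maximal interval $(a,b)$. By the forward direction $\bigl(\tilde\xx(t),\Gamma(\tilde\xx(t))\bigr)$ solves the polynomial system from $(\xx_0,\Gamma(\xx_0))$ on $(a,b)$; since $\tilde\fb$ is polynomial and hence locally Lipschitz, the uniqueness theorem for Lipschitz ODEs forces $(\xx(t),\vv(t))=\bigl(\tilde\xx(t),\Gamma(\tilde\xx(t))\bigr)$ on $(a,b)$, so $\xx(t)=\tilde\xx(t)$ there, i.e. $\xx(t)$ is the elementary trajectory from $\xx_0$.

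The step I expect to be the main obstacle --- indeed the only one that is not routine --- is the bookkeeping about intervals of existence. A polynomial trajectory may live strictly longer than the elementary trajectory it projects onto: for $\dot x=\sqrt x$ the recast system $\dot x=v,\ \dot v=\frac{1}{2}$ has trajectories defined for all $t$, whereas $\dot x=\sqrt x$ escapes $U=(0,\infty)$ in finite time, and past that instant $\vv(t)$ no longer equals $\Gamma(\xx(t))$ and $\xx(t)$ stops solving $\dot\xx=\fb(\xx)$. Consequently the equivalence holds, and must be read as holding, exactly on the maximal interval of the elementary trajectory, equivalently the connected component of $\{t:\xx(t)\in U\}$ containing $0$. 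To pin this down I would complement the argument above with the observation that $t\mapsto\vv(t)-\Gamma(\xx(t))$ satisfies, on that component, a Lipschitz ODE whose right-hand side vanishes whenever that difference is zero (again by the identity), so it is identically zero; together with the standard characterisation of maximal solutions by escape from compact subsets of $U$, this identifies the two intervals. The remaining content is just differentiation and an appeal to Picard--Lindel\"of.
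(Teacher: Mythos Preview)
Your argument is correct and follows exactly the route the paper intends: both directions reduce to the defining identity~(\ref{eqn:ode-poly-define}), i.e.\ $\tilde\fb(\xx,\Gamma(\xx))=(\fb(\xx),\mathcal J_{\Gamma}(\xx)\fb(\xx))$, together with the chain rule and uniqueness for locally Lipschitz fields --- the paper compresses all of this into the single sentence ``deduced directly from~(\ref{eqn:ode-poly-define}).'' Your additional care about intervals of existence (the $\sqrt{x}$ example) is a genuine refinement over the paper, which leaves that bookkeeping implicit.
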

\begin{proof}
  The result can be deduced directly from (\ref{eqn:ode-poly-define}). \qed
\end{proof}

\subsection{Abstracting EDSs by PDSs}\label{subsec:poly-eds}
In this part, given an EDS $\mathcal C_{\xx}\,\define\,(\Xi_{\xx},\fb_{\xx},D_{\xx})$ we will construct a PDS
$C_{\yy}\,\define\,(\Xi_{\yy},\fb_{\yy},D_{\yy})$ that simulates $\mathcal C_{\xx}$. The construction is based on the procedure introduced in Section \ref{subsec:poly-ode} on polynomial transformation of elementary ODEs. The basic idea is to construct a simulation map using the replacement equations.
The difference here is that when abstracting an EDS, we need to replace non-polynomial terms occurring in not only the vector field, but also the initial set and domain. Roughly, the construction of $\mathcal C_{\yy}$ consists of the following four steps.

\begin{enumerate}
  \item[(S1)] Introduce new variables to replace all non-polynomial terms in $\fb_{\xx}$, $\Xi_{\xx}$ and $D_{\xx}$, and obtain a collection of replacement equations $\vv=\Gamma(\xx)$ such that $\fb_{\xx}\llbracket \vv/\Gamma(\xx)\rrbracket$,  $\Xi_{\xx}\llbracket \vv/\Gamma(\xx)\rrbracket$ and  $D_{\xx}\llbracket \vv/\Gamma(\xx)\rrbracket$ all become polynomial expressions.
  \item[(S2)] Differentiate both sides of $\vv=\Gamma(\xx)$ w.r.t. time to get $\dot \vv=J_{\Gamma}(\xx)\cdot  \fb(\xx)$, and replace all newly appearing non-polynomial terms by introducing more variables.
  \item[(S3)] Repeat (S2) until no more variables need to be introduced. For simplicity, still denote the final set of replacement equations by $\vv=\Gamma(\xx)$. By Proposition \ref{prop:ode-polynomialize}, a polynomial vector field $\tilde \fb(\xx,\vv)$ as in (\ref{eqn:ode-poly-define}) will be obtained. Let $\yy\,\define\,(\xx,\vv)$ and define
      \begin{equation}\label{eqn:abs-ode}
        \fb_{\yy}(\xx,\vv)\quad\define\quad \tilde\fb(\xx,\vv)\enspace .
      \end{equation}
  \item[(S4)] Define the simulation map $\Theta:D_{\xx}\rightarrow \RR^{|\yy|}$ as\footnote{Here we assume that all elementary functions in $\Xi_{\xx},\fb_{\xx}$ and $D_{\xx}$ are defined on $D_{\xx}$.}
  \begin{equation}\label{eqn:abs-theta}
    \Theta (\xx) = (\xx, \Gamma(\xx))\enspace.
  \end{equation}
  Then use $\Theta$ to construct $\Xi_{\yy}$ and $D_{\yy}$ as illustrated later.
\end{enumerate}

After the above four steps, a CDS $(\Xi_{\yy},\fb_{\yy},D_{\yy})$ will be obtained, which is intended to be the polynomial abstraction of $\mathcal C_{\xx}$ under simulation map $\Theta$. We next show how to get $\Xi_{\yy}$ and $D_{\yy}$ in detail to complete the construction.

The image of $\Xi_{\xx}$ under the simulation map $\Theta$ is
$$\Theta(\Xi_{\xx})=\{(\xx,\vv)\in\RR^{|\yy|}\mid \xx\in\Xi_{\xx}\wedge \vv=\Gamma(\xx)\}\,,$$
briefly denoted by
$\Theta(\Xi_{\xx})\,\,\,\define\,\,\,\Xi_{\xx}\wedge \vv=\Gamma(\xx)$, or alternatively
\begin{equation}\label{eqn:abs-xi-1}
  \Theta(\Xi_{\xx})\,\,\,\define\,\,\,\Xi_{\xx}\llbracket\vv/\Gamma(\xx)\rrbracket \wedge \vv=\Gamma(\xx)\,.
\end{equation}
By (S1), the first conjunct in (\ref{eqn:abs-xi-1}) is of polynomial form, but the second conjunct contains elementary functions. By Definition \ref{dfn:simu-cds}, we need to get a polynomial over-approximation $\Xi_{\yy}$ of $\Theta(\Xi_{\xx})$, which means we need to abstract $\vv=\Gamma(\xx)$ in (\ref{eqn:abs-xi-1}) by polynomial expressions.
We propose the following four ways to do so.
\begin{itemize}
  \item[(W1)]  When $\Gamma(\xx)$ are some special kinds elmentary functions, $\vv= \Gamma(\xx)$ can be equivalently transformed to polynomial expressions, e.g.          \begin{equation}\label{eqn:abs-gamma-1}
      \left\{ \begin{aligned}
               & v= \frac{1}{x} \Longleftrightarrow vx=1 \\
                & v= \sqrt{x} \Longleftrightarrow v^2=x \,\wedge\, v\geq 0 \\
               \end{aligned}
      \right. \enspace .
      \end{equation}
  \item[(W2)] If $\Xi_{\xx}$ is a bounded region and the upper/lower bounds of each component $x_i$ of $\xx$ can be easily obtained, then we can compute the Taylor polynomial expansion $\mathbf p(\xx)$ of $\Gamma(\xx)$ over the bounded region up to a certain degree, as well as an interval over-approximation $\mathbf I$ of the corresponding truncation error, such that $\vv=\Gamma(\xx)$ can be approximated by $\vv\in (\mathbf p(\xx) + \mathbf I)$. We will illustrate this by an example presented later.
  \item[(W3)] We can also just compute the range of $\Gamma(\xx)$ (over $\Xi_{\xx}$) as an over-approximation of $\vv$, e.g.
            \begin{equation}\label{eqn:abs-gamma-2}
      \left\{ \begin{aligned}
               & v= \sin{x} \Longrightarrow  -1\leq v \leq 1 \\
                & v= e^{x} \Longrightarrow v>0 \\
               \end{aligned}
      \right. \enspace .
      \end{equation}
  \item[(W4)] The simplest way is to remove the constraint $\vv=\Gamma(\xx)$ entirely, which means $\vv$ is allowed to take any value from $\RR^{|\vv|}$.
\end{itemize}

From (W1) to (W4), the over-approximation of $\vv=\Gamma(\xx)$ becomes more and more coarse. Usually it takes more effort to obtain a more refined abstraction, but the result would be more helpful for analysis of the original system. We will discuss in Section \ref{sec:application} how to choose among (W1)-(W4) when constructing abstractions of EHSs, depending on what kind of inductive invariants are to be generated for safety verification tasks.

The construction of $D_{\yy}$ is the same as $\Xi_{\yy}$. Then we can give the following conclusion.
\begin{theorem}[Abstracting EDS by PDS]\label{thm:abs-eds-pds}
  Given an EDS $\mathcal C_{\xx}\,\define\,(\Xi_{\xx},\fb_{\xx},D_{\xx})$, let $\mathcal C_{\yy}\,\define\,(\Xi_{\yy},$ $\fb_{\yy},D_{\yy})$, where $\fb_{\yy}$ is given by (\ref{eqn:abs-ode}) and (\ref{eqn:ode-poly-define}), and $\Xi_{\yy},D_{\yy}$ are given by (8) together with (W1)-(W4). Then $\mathcal C_{\yy}$ is a \emph{polynomial} abstraction of $\mathcal C_{\xx}$ in the sense of Definition \ref{dfn:simu-cds}, under simulation map $\Theta$ defined by (\ref{eqn:abs-theta}).
\end{theorem}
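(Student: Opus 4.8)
The plan is to verify directly the two bullet conditions of Definition \ref{dfn:simu-cds} for the map $\Theta(\xx)=(\xx,\Gamma(\xx))$, relying on Theorem \ref{thm:sim-check} to handle the trajectory condition. First I would observe that $\Theta$ is continuously differentiable on $D_{\xx}$, since by Proposition \ref{prop:ode-polynomialize} the components $\gamma_i$ of $\Gamma$ are elementary functions and, by the assumption recorded in the footnote to (S4), they are all defined (hence smooth) on $D_{\xx}$. Next, for the inclusion of initial and domain sets: by construction (\ref{eqn:abs-xi-1}) we have $\Theta(\Xi_{\xx})=\{(\xx,\vv)\mid \xx\in\Xi_{\xx}\wedge \vv=\Gamma(\xx)\}$, and each of (W1)--(W4) produces a set $\Xi_{\yy}$ that is by design a \emph{superset} of this image --- (W1) because the polynomial reformulation is equivalent, (W2) because $\mathbf{p}(\xx)+\mathbf{I}$ contains $\Gamma(\xx)$ by the truncation-error bound, (W3) because the computed range contains all values of $\Gamma$ over $\Xi_{\xx}$, and (W4) trivially. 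The same argument applied to $D_{\xx}$ gives $\Theta(D_{\xx})\subseteq D_{\yy}$. So the first bullet of Definition \ref{dfn:simu-cds} holds.

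For the trajectory condition I would invoke Theorem \ref{thm:sim-check} rather than reasoning about trajectories directly. With $\ell=|\yy|=n+m$ and $\Theta=(x_1,\dots,x_n,\gamma_1,\dots,\gamma_m)$, the Jacobian $\mathcal J_{\Theta}(\xx)$ is the block matrix $\binom{I_n}{\mathcal J_{\Gamma}(\xx)}$, so
\begin{displaymath}
  \mathcal J_{\Theta}(\xx)\cdot \fb_{\xx}(\xx)=\left(\begin{array}{c}\fb_{\xx}(\xx)\\ \mathcal J_{\Gamma}(\xx)\cdot\fb_{\xx}(\xx)\end{array}\right).
\end{displaymath}
On the other hand, by (\ref{eqn:abs-ode}) and (\ref{eqn:ode-poly-define}), $\fb_{\yy}(\Theta(\xx))=\tilde\fb(\xx,\Gamma(\xx))$, which is precisely the right-hand side of (\ref{eqn:ode-poly-define}) \emph{before} the substitution $\llbracket\vv/\Gamma(\xx)\rrbracket$ is undone --- that is, evaluating the polynomial $\tilde\fb$ at $\vv=\Gamma(\xx)$ recovers $(\fb_{\xx}(\xx),\mathcal J_{\Gamma}(\xx)\cdot\fb_{\xx}(\xx))$. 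Hence the matching condition $\fb_{\yy}(\Theta(\xx))=\mathcal J_{\Theta}(\xx)\cdot\fb_{\xx}(\xx)$ holds for every $\xx\in D_{\xx}$, and Theorem \ref{thm:sim-check} yields that $\mathcal C_{\yy}$ simulates $\mathcal C_{\xx}$ via $\Theta$. Finally, $\mathcal C_{\yy}$ is polynomial: $\fb_{\yy}=\tilde\fb$ is a polynomial vector field by Proposition \ref{prop:ode-polynomialize}, and $\Xi_{\yy},D_{\yy}$ are polynomial sets because the first conjunct of (\ref{eqn:abs-xi-1}) is polynomial by (S1) and each of (W1)--(W4) replaces the elementary conjunct $\vv=\Gamma(\xx)$ by a polynomial constraint.

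The only genuinely delicate point --- the ``main obstacle'' --- is making precise the claim that evaluating $\tilde\fb(\xx,\vv)$ at $\vv=\Gamma(\xx)$ gives back the original unsubstituted expression $(\fb_{\xx}(\xx),\mathcal J_{\Gamma}(\xx)\cdot\fb_{\xx}(\xx))$. This is the content of the identity $\big(\fb(\xx),\mathcal J_\Gamma(\xx)\cdot\fb(\xx)\big)=\big(\fb(\xx),\mathcal J_\Gamma(\xx)\cdot\fb(\xx)\big)\llbracket\vv/\Gamma(\xx)\rrbracket\circ(\text{resubstitute }\Gamma(\xx)\text{ for }\vv)$, i.e.\ that substitution followed by back-substitution is the identity on expressions in which every introduced $v_i$ stands for a distinct syntactic subterm $\gamma_i(\xx)$; this is exactly what the termination analysis behind Proposition \ref{prop:ode-polynomialize} guarantees (the procedure only stops when all non-polynomial subterms of $\mathcal J_\Gamma(\xx)\cdot\fb(\xx)$ have been named), and it is already encapsulated in equation (\ref{eqn:ode-poly-define}). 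So in the write-up I would simply cite Proposition \ref{prop:ode-polynomialize} and Theorem \ref{thm:ode-equiv} for this step and keep the proof short, the remaining set-inclusion bookkeeping being routine.
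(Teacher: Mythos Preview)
Your proposal is correct and follows essentially the same route as the paper: verify the set inclusions $\Theta(\Xi_{\xx})\subseteq\Xi_{\yy}$, $\Theta(D_{\xx})\subseteq D_{\yy}$, then apply Theorem~\ref{thm:sim-check} by computing $\mathcal J_{\Theta}(\xx)=\binom{\mathbf{Id}_{|\xx|}}{\mathcal J_{\Gamma}(\xx)}$ and invoking (\ref{eqn:ode-poly-define}) to obtain $\fb_{\yy}(\Theta(\xx))=\mathcal J_{\Theta}(\xx)\cdot\fb_{\xx}(\xx)$. Your write-up is in fact more explicit than the paper's in justifying the differentiability of $\Theta$, the over-approximation property of each of (W1)--(W4), and the substitution/back-substitution identity, but these elaborations only flesh out steps the paper treats as ``easy to check.''
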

\begin{proof}
  First, it is easy to check that $\mathcal C_{\yy}$ is a PDS and $\Theta(\Xi_{\xx})\subseteq \Xi_{\yy}$, $\Theta(D_{\xx})\subseteq D_{\yy}$.
Second, by (\ref{eqn:abs-theta}) we have
  $$\mathcal J_{\Theta}(\xx)=
     \left( \begin{array}{c}
                \mathbf{Id}_{|\xx|} \\
                \mathcal J_{\Gamma}(\xx)
         \end{array} \right)\enspace ,
  $$
  where $\mathbf{Id}_{|\xx|}$ denotes the $|\xx|$-dimensional identity matrix. Then for any $\xx\in D_{\xx}$,
  $$\mathcal J_{\Theta}(\xx) \cdot \fb_{\xx}(\xx)=
       \left( \begin{array}{c}
                \mathbf{Id}_{|\xx|} \\
                \mathcal J_{ \Gamma}(\xx)
         \end{array} \right) \cdot \fb_{\xx}(\xx)
                =\left( \begin{array}{c}
                \fb_{\xx}(\xx) \\
                \mathcal J_{\Gamma}(\xx)\cdot  \fb_{\xx}(\xx)
         \end{array} \right)
         \enspace .
  $$
  Then according to the above formula and (\ref{eqn:abs-theta}), (\ref{eqn:abs-ode}) and  (\ref{eqn:ode-poly-define}), we get
$\fb_{\yy}(\Theta(\xx))=\fb_{\yy}(\xx,\Gamma(\xx))=\tilde\fb(\xx,\Gamma(\xx))=\mathcal J_{\Theta}(\xx) \cdot \fb_{\xx}(\xx)$. Therefore by Theorem \ref{thm:sim-check} we get the conclusion. \qed
\end{proof}

\begin{example}\label{eg:eds-pds}
Consider the EDS $\mathcal C_{\xx}\,\define\,(\Xi_{\xx},\fb_{\xx}, D_{\xx})$, where
\begin{itemize}
  \item[--]  $\Xi_{\xx}\,\define\,(x+0.5)^2+(y-0.5)^2-0.16\leq 0$;
  \item[--]  $D_{\xx}\,\define\,-2\leq x\leq 2 \wedge -2 \leq y \leq 2$; and
  \item[--]  $\fb_{\xx}$ defines the ODE
\begin{equation}\label{eqn:eg-transode}
\left(\begin{array}{c}
\dot x\\
\dot y
\end{array}
\right) = \left( \begin{array}{c}
e^{-x} + y - 1 \\
-\sin^2(x)
\end{array} \right)\enspace.
\end{equation}
\end{itemize}
\end{example}

We will show how to construct a PDS $\mathcal C_{\yy}$ that simulates $\mathcal C_{\xx}$ following the above described steps.
\begin{itemize}
  \item (S1-S3): Noticing that $\Xi_{\xx}$ and $D_{\xx}$ are both in polynomial forms, we only need to replace non-polynomial terms in $\fb_{\xx}$. We finally obtain the replacement relations $\vv=\Gamma(\xx)$ given by
      \begin{equation}\label{eqn:eg-eds-gamma}
      (v_1,v_2,v_3)= (\sin{x},e^{-x},\cos x)\end{equation}
      and the transformed polynomial ODE
      \begin{equation}\label{eqn:eg-transode-poly}
\left(\begin{array}{c}
\dot x\\
\dot y\\
\dot v_1\\
\dot v_2\\
\dot v_3
\end{array}
\right)
= \left( \begin{array}{c}
v_2 + y - 1 \\
-v_1^2\\
v_3(v_2+y-1)\\
-v_2(v_2+y-1)\\
-v_1(v_2+y-1)
\end{array} \right)\enspace ,
\end{equation}
the right-hand-side of which is defined to be $\fb_{\yy}$.
\item (S4): The simulation map $\Theta$ is given by
$$\Theta(x,y)=(x,y,\sin{x},e^{-x},\cos x)\enspace .$$
The images of $\Xi_{\xx}$ and $D_{\xx}$ under $\Theta$ are
$$
\Theta(\Xi_{\xx})\,\,\define\,\,\Xi_{\xx}\wedge v_1=\sin{x}\wedge v_2=e^{-x}\wedge v_3=\cos{x}
$$
and
$$\Theta(D_{\xx})\,\,\define\,\,D_{\xx}\wedge v_1=\sin{x}\wedge v_2=e^{-x}\wedge v_3=\cos{x}$$
respectively. For the above two formulas, (W1) is not applicable, whereas we can use any of (W2)-(W4) to abstract them. Here we just give one possible way. First, use (W4) to abstract $\Theta(\Xi_{\xx})$, and define
$\Xi_{\yy}\,\,\define\,\,\Xi_{\xx}$. Next, adopt (W2) to abstract $\Theta(D_{\xx})$; using the tool {\small \sf COSY INFINITY}\footnote{\url{http://bt.pa.msu.edu/index_cosy.htm}} for \emph{Taylor model} \cite{KM03} computation, we expand $\sin{x}$, $e^{-x}$ and $\cos{x}$ over $x\in [-2,2]$ at point $x=0$ up to degree 6,
and obtain
\begin{eqnarray}
  & & p_1(x)+l_1\leq v_1\leq p_1(x)+u_1 \label{eqn:taylor-sin-x} \\
 \mathit{TM}_{\xx,\vv}\, \define  & \wedge &  p_2(x)+l_2\leq v_2\leq p_2(x)+u_2 \label{eqn:taylor-exp-x}\enspace \enspace \enspace \enspace \\
   & \wedge &  p_3(x)+l_3\leq v_3\leq p_3(x)+u_3 \enspace .\nonumber
\end{eqnarray}
Figure \ref{fig:taylor-model} is an illustration of the relations between $v_1, v_2$ and $x$ given by (\ref{eqn:taylor-sin-x}) and (\ref{eqn:taylor-exp-x}) respectively,
where
\begin{itemize}
  \item $p_1(x)\,=\,2 (0.5 x) - 1.333333333333333  (0.5 x)^3 + 0.2666666666666667 (0.5 x)^5$
  \item $l_1=-0.08888888888890931$
  \item $u_1=0.08888888888890931$
\end{itemize}
and
\begin{itemize}
  \item  $p_2(x)\,= \,1-2(0.5x)+2(0.5x)^2 \,- 1.333333333333333(0.5x)^3$
  \item[] \qquad\qquad$+0.6666666666666666(0.5x)^4-0.2666666666666667(0.5x)^5$
  \item[] \qquad\qquad$+0.08888888888888889(0.5x)^6$
  \item  $l_2=-0.1876585675919477$
  \item  $u_2=0.1876585675919477$ \enspace .
\end{itemize}

\begin{figure}
\begin{center}
\includegraphics[width=1.6in,height=1.5in]{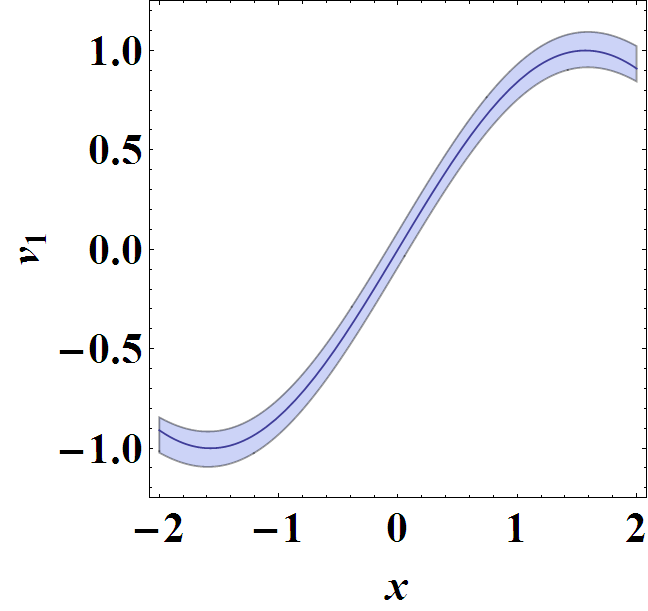}
\hspace{1cm}
\includegraphics[width=1.5in,height=1.52in]{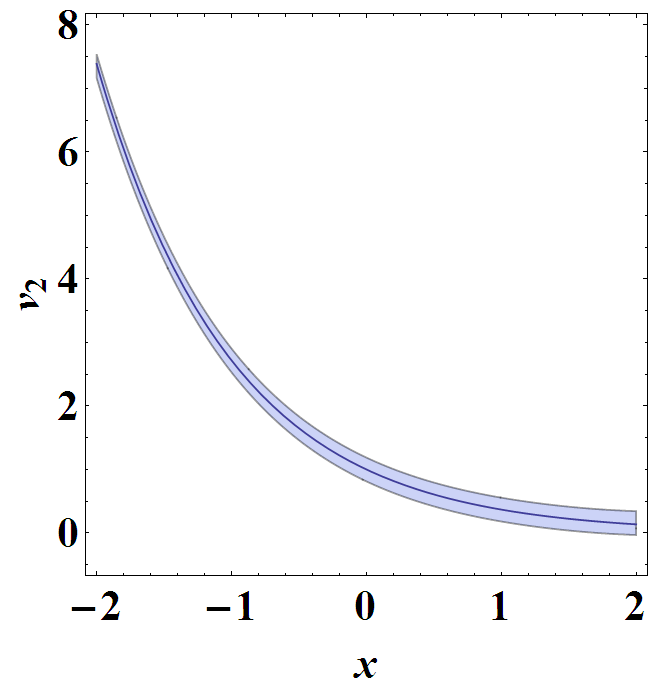}
\caption{Taylor polynomial approximation of elementary functions}
\label{fig:taylor-model}
\end{center}
\end{figure}

Then we can define
$D_{\yy}\,\,\define\,\, D_{\xx}\wedge \mathit{TM}_{\xx,\vv}$.
Thus we finally get a PDS $\mathcal C_{\yy}\,\define\,(\Xi_{\yy},\fb_{\yy},$ $D_{\yy})$ that simulates $\mathcal C_{\xx}$. Note that here $\Xi_{\yy}$ is not a subset of $D_{\yy}$, which conflicts with our assumption on CDSs. However, allowing more behavior in $\mathcal C_{\yy}$ does not affect the soundness of abstraction; besides, this problem can be easily remedied by taking $\Xi_{\yy}\wedge D_{\yy}$ as the initial set. We keep the current form for ease of safety verification in Section \ref{sec:application}.
\end{itemize}

\subsection{Abstracting EHSs by PHSs}\label{subsec:poly-ehs}

In the previous sections, we have presented a method to abstract an EDS to
a PDS such that the PDS simulates the EDS.
Now we show, given an EHS $\mathcal H_{\xx}$, how to construct a simulation map $\Theta$ and the corresponding PHS $\mathcal H_{\yy}$ that simulates $\mathcal H_{\xx}$.
Actually, this can be easily done by just extending the previous abstraction approach a bit to take into account guard constraints and reset functions.
Another difference is that we need to treat each mode of a HA separately by constructing an individual simulation map for each of them.

More specifically, given an EHS $\mathcal{H}_{\xx}\,\define\,(Q,X,f_{\xx},D_{\xx},$
$E,G_{\xx},R_{\xx},$ $\Xi_{\xx})$, for each mode $q\in Q$, and for any $e\in E$ with $q$ the starting mode, we need to introduce new variables to replace all non-polynomial terms occurring in $\fb_{\xx,q}$, $\Xi_{\xx,q}$, $D_{\xx,q}$, $G_{\xx,e}$ and $R_{\xx,e}$, and then compute the time derivatives of the fresh variables, as we did in the continuous case. In this way, for each mode $q$ we will obtain a vector of new variables $\vv_q$ and the corresponding replacement equations $\vv_q=\Gamma_{q}(\xx)$; without loss of generality, we can assume all $\vv_q$ to be of the same dimension, and thus can get rid of the subscript $q$ of $\vv_q$. At the same time, for all mode $q$, the elementary vector field $\fb_{\xx,q}$ will be transformed into a polynomial one, i.e. $\tilde \fb_q(\xx,\vv)$, as given by Proposition \ref{prop:ode-polynomialize} and formula (\ref{eqn:ode-poly-define}).

Let $\yy\,\define\,(\xx,\vv)$. Let $\Theta_q: D_{\xx,q}\rightarrow \RR^{|\yy|}$ be given by\footnote{Here we assume that for all $q\in Q$ and $e=(q,q')\in E$, the elementary functions in  $\fb_{\xx,q}$, $\Xi_{\xx,q}$, $D_{\xx,q}$, $G_{\xx,e}$ and $R_{\xx,e}$ are well defined on $D_{\xx,q}$.}
\begin{equation}\label{eqn:abs-hs-theta}
  \Theta_q (\xx) = (\xx,\Gamma_q(\xx))\enspace .
\end{equation}
Now the construction of $\mathcal{H}_{\yy}\,\define\,(Q,Y,f_{\yy},D_{\yy}, E,G_{\yy},R_{\yy},\Xi_{\yy})$ can proceed as follows.
\begin{itemize}
  \item For each $q\in Q$, let
    \begin{equation}\label{eqn:abs-hs-fb}
      \fb_{\yy,q}(\xx,\vv)\,\,\define\,\,\tilde \fb_q(\xx,\vv)
    \end{equation}
    with $\tilde \fb_q(\xx,\vv)$ given by (\ref{eqn:ode-poly-define}).
  \item For each $q\in Q$, abstract $\vv=\Gamma_q(\xx)$ in
    \begin{equation}\label{eqn:abs-hs-xi}
     \Xi_{\xx,q}\llbracket\vv/\Gamma_q(\xx)\rrbracket \wedge \vv=\Gamma_q(\xx)
    \end{equation}
    and
    \begin{equation}\label{eqn:abs-hs-domain}
     D_{\xx,q}\llbracket\vv/\Gamma_q(\xx)\rrbracket \wedge \vv=\Gamma_q(\xx)
    \end{equation}
    by polynomial expressions along the ways (W1)-(W4),
    and thus $\Xi_{\yy,q}$ and $D_{\yy,q}$ can be obtained.
  \item For each $e\in E$ with $q$ the starting mode, abstract $\vv=\Gamma_q(\xx)$ in
    \begin{equation}\label{eqn:abs-hs-guard}
     G_{\xx,e}\llbracket\vv/\Gamma_q(\xx)\rrbracket \wedge \vv=\Gamma_q(\xx)
    \end{equation}
    by polynomial expressions along the ways (W1)-(W4),
    and thus $G_{\yy,e}$ can be obtained.
\end{itemize}

So far, the only component left unspecified in $\mathcal H_{\yy}$ is $R_{\yy,e}$.

\begin{itemize}
  \item For each $e=(q,q')\in E$, define
\begin{eqnarray}
\tilde {R}_{\yy,e}(\xx,\vv)
\!\!\!& \,\define & \!\!\! \{ (\xx',\vv')\mid \xx'\in R_{\xx,e}(\xx)\llbracket \vv/\Gamma_q(\xx) \rrbracket \wedge\, \enspace \nonumber \\
& &\!\quad \qquad \quad \vv' = \Gamma_{q'}(\xx') \,\} \enspace . \label{eqn:abs-hs-reset}
\end{eqnarray}
Then abstract $\vv' = \Gamma_{q'}(\xx')$ in (\ref{eqn:abs-hs-reset}) by polynomial expressions along the ways (W1)-(W4),
and thus $R_{\yy,e}$ can be obtained. For example, if (W4) is adopted then $R_{\yy,e}$ can be defined as
$$R_{\yy,e}(\xx,\vv)\,\,\,\define\,\,\,\{(\xx',\vv')\mid \xx'\in R_{\xx,e}(\xx)\llbracket \vv/\Gamma_q(\xx) \rrbracket\}\enspace .$$
In particular, if $R_{\xx,e}$ is an identity map and $\Gamma_{q}=\Gamma_{q'}$,
then $R_{\yy,e}$ is also an identity map.
\end{itemize}

\begin{theorem}[Abstracting EHS by PHS]\label{thm:abs-ehs-phs}
Given an EHS $\mathcal{H}_{\xx}\,\define\,(Q,X,f_{\xx},D_{\xx}, E,G_{\xx},$ $R_{\xx},\Xi_{\xx})$, let $\mathcal{H}_{\yy}\, \define\, (Q,Y,$ $f_{\yy}, D_{\yy},$ $E,$ $G_{\yy}, R_{\yy}, \Xi_{\yy})$, where $\fb_{\yy,q}$ is given by (\ref{eqn:abs-hs-fb}) and (\ref{eqn:ode-poly-define}), and $\Xi_{\yy,q},D_{\yy,q},$ $G_{\yy,e},R_{\yy,e}$ are given by (\ref{eqn:abs-hs-xi}), (\ref{eqn:abs-hs-domain}), (\ref{eqn:abs-hs-guard}) and (\ref{eqn:abs-hs-reset}), together with (W1)-(W4), respectively. Then $\mathcal H_{\yy}$ is a \emph{polynomial} abstraction of $\mathcal H_{\xx}$ in the sense of Definition \ref{dfn:simu-HS}, under the simulation maps $\Theta_q$ defined by (\ref{eqn:abs-hs-theta}).
\end{theorem}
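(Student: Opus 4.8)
The plan is to verify the three conditions of Definition~\ref{dfn:simu-HS} for the pair $(\mathcal{H}_{\xx},\mathcal{H}_{\yy})$ and the family of maps $\{\Theta_q\mid q\in Q\}$, after first recording that $\mathcal{H}_{\yy}$ is genuinely polynomial. The latter is immediate: each $\fb_{\yy,q}$ equals the recast vector field $\tilde\fb_q$, which is polynomial by Proposition~\ref{prop:ode-polynomialize}, while each of $\Xi_{\yy,q}$, $D_{\yy,q}$, $G_{\yy,e}$, $R_{\yy,e}$ is obtained by taking the already-polynomial part $(\cdot)\llbracket\vv/\Gamma_q(\xx)\rrbracket$ of the corresponding EHS component and conjoining (or intersecting) it with one of the polynomial relaxations (W1)--(W4) of $\vv=\Gamma_q(\xx)$; all of these are polynomial by construction.

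First I would dispatch the mode-local simulation condition. For each fixed $q$, the CDS $(\Xi_{\yy,q},\fb_{\yy,q},D_{\yy,q})$ is exactly the PDS produced by the construction of Section~\ref{subsec:poly-eds} applied to $(\Xi_{\xx,q},\fb_{\xx,q},D_{\xx,q})$ with replacement equations $\vv=\Gamma_q(\xx)$ and simulation map $\Theta_q$ of~(\ref{eqn:abs-hs-theta}). Hence Theorem~\ref{thm:abs-eds-pds} applies verbatim and yields that $(\Xi_{\yy,q},\fb_{\yy,q},D_{\yy,q})$ simulates $(\Xi_{\xx,q},\fb_{\xx,q},D_{\xx,q})$ via $\Theta_q$ — the first bullet of Definition~\ref{dfn:simu-HS} — and in particular $\Theta_q(\Xi_{\xx,q})\subseteq\Xi_{\yy,q}$ and $\Theta_q(D_{\xx,q})\subseteq D_{\yy,q}$.

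Next I would handle the guard and reset conditions. For the guard, take $e=(q,q')\in E$ and $\xx\in G_{\xx,e}$. Then $(\xx,\Gamma_q(\xx))$ satisfies $G_{\xx,e}\llbracket\vv/\Gamma_q(\xx)\rrbracket$ by the substitution identity, so $\Theta_q(\xx)=(\xx,\Gamma_q(\xx))$ satisfies the conjunction~(\ref{eqn:abs-hs-guard}); since $G_{\yy,e}$ is a superset of the polynomial relaxation of~(\ref{eqn:abs-hs-guard}) — the soundness of (W1)--(W4), discussed below — we get $\Theta_q(\xx)\in G_{\yy,e}$, i.e. $\Theta_q(G_{\xx,e})\subseteq G_{\yy,e}$. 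For the reset, fix $e=(q,q')\in E$, $\xx\in G_{\xx,e}$, $\xx'\in R_{\xx,e}(\xx)$, and write $\vv=\Gamma_q(\xx)$, $\vv'=\Gamma_{q'}(\xx')$. Then $(\xx',\vv')\in\tilde R_{\yy,e}(\xx,\vv)$ by~(\ref{eqn:abs-hs-reset}), and $R_{\yy,e}(\Theta_q(\xx))=R_{\yy,e}(\xx,\vv)$ is a superset of $\tilde R_{\yy,e}(\xx,\vv)$ by the soundness of the relaxation of $\vv'=\Gamma_{q'}(\xx')$; hence $\Theta_{q'}(\xx')=(\xx',\vv')\in R_{\yy,e}(\Theta_q(\xx))$, which is the third bullet. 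The three conditions together give that $\mathcal{H}_{\yy}$ is a polynomial abstraction of $\mathcal{H}_{\xx}$ under $\{\Theta_q\mid q\in Q\}$.

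The main obstacle — really the only non-bookkeeping point — is to argue uniformly that each of (W1)--(W4) replaces $\vv=\Gamma_q(\xx)$ (resp. $\vv'=\Gamma_{q'}(\xx')$) by a polynomial formula whose solution set \emph{contains} the graph $\{(\xx,\Gamma_q(\xx))\}$ (resp. $\{(\xx',\Gamma_{q'}(\xx'))\}$). For (W1) this is the equivalences in~(\ref{eqn:abs-gamma-1}); for (W2) it is the defining property of a Taylor model, that $\Gamma_q(\xx)-\mathbf p(\xx)\in\mathbf I$ over the bounding box; for (W3) it is that $\Gamma_q(\xx)$ lies in its own range; for (W4) it is trivial. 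Once this "superset" lemma is isolated as the first step, the three inclusions follow mechanically, and intersecting the relaxations with the already-polynomial parts does not break them, since the base point $(\xx,\Gamma_q(\xx))$ satisfies those parts by the substitution identity. The whole argument then runs exactly parallel to the proof of Theorem~\ref{thm:abs-eds-pds}, with the extra guard and reset clauses added on.
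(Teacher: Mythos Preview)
Your proposal is correct and follows essentially the same route as the paper's own proof: check that $\mathcal H_{\yy}$ is polynomial, invoke Theorem~\ref{thm:abs-eds-pds} for the mode-local simulation, use the soundness of (W1)--(W4) for the guard inclusion, and for the reset show $\Theta_{q'}(R_{\xx,e}(\xx))=\tilde R_{\yy,e}(\Theta_q(\xx))\subseteq R_{\yy,e}(\Theta_q(\xx))$. Your explicit isolation of the ``superset lemma'' for (W1)--(W4) is a little more detailed than the paper, which simply cites (W1)--(W4) at each step, but the argument is the same.
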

\begin{proof}
 First, it is easy to check that $\mathcal H_{\yy}$ is a PHS. Second,
 by Theorem \ref{thm:abs-eds-pds}, we can get
 $(\Xi_{{\yy},q},\fb_{{\yy},q}, D_{{\yy},q})$ simulates $(\Xi_{{\xx},q}, \fb_{{\xx},q},$ $D_{{\xx},q})$ via $\Theta_q$, for each $q\in Q$.
 Third, from (\ref{eqn:abs-hs-guard}) and (W1)-(W4) it is easy to see that
 $\Theta_q( G_{{\xx},e})\subseteq  G_{{\yy},e}$, for any $e=(q,q')\in E$.
 By Definition \ref{dfn:simu-HS}, we finally need to show that
$\Theta_{q'}( R_{{\xx},e}({\xx}))\subseteq R_{{\yy},e}(\Theta_q({\xx}))$, for any $e=(q,q')\in E$ and any ${\xx}\in G_{\xx,e}$.

By (\ref{eqn:abs-hs-theta}) we have
\begin{equation}\label{eqn:proof-ehs-pds}
\Theta_{q'}( R_{{\xx},e}({\xx}))=\{(\xx',\vv')\mid \xx'\in R_{{\xx},e}({\xx}) \wedge \vv'=\Gamma_{q'}(\xx')\}\enspace.
\end{equation}
By (\ref{eqn:abs-hs-reset}), (\ref{eqn:abs-hs-theta}) and (\ref{eqn:proof-ehs-pds}) we have
$$\tilde R_{\yy,e}(\Theta_q(\xx))=\tilde R_{\yy,e}(\xx,\Gamma_q(\xx))=\Theta_{q'}( R_{{\xx},e}({\xx}))\enspace .$$
By (W1)-(W4) we have $\tilde R_{\yy,e}(\Theta_q(\xx))\subseteq R_{\yy,e}(\Theta_q(\xx))$. Therefore we finally get \\$\Theta_{q'}( R_{{\xx},e}({\xx}))\subseteq R_{{\yy},e}(\Theta_q({\xx}))$. \qed
\end{proof}

\begin{example}
  Consider the example of a bouncing ball over a sine-waved surface as illustrated by the left picture in Figure \ref{fig:bouncing}, adapted from a similar one in \cite{Hydlogic}. The motion of the ball stays in the two-dimensional $x$-$y$ plane, with $x$ denoting the horizontal position and $y$ denoting the height, and the velocity along the two directions are denoted by $v_x$ and $v_y$ respectively. When the ball hits the surface given by the sine wave $y=\sin{x}$, its dynamics changes instantaneously. We assume the collision between the ball and the surface to be perfectly elastic so that there is no loss of energy. For instance, if the ball touches the surface at point $(0,0)$ with a downward vertical velocity $v_y$ and zero horizontal velocity $v_x$, then after collision $v_y$ becomes $0$ while $v_x$ takes the value of $v_y$ before collision.

\begin{figure}
\begin{center}
\includegraphics[width=1.6in,height=1.5in]{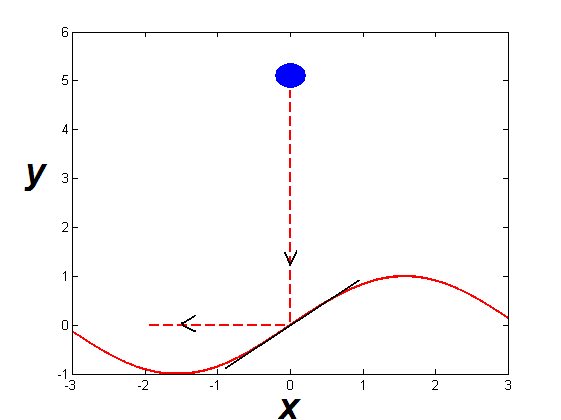}
\hspace{1cm}
\includegraphics[width=1.6in,height=1.5in]{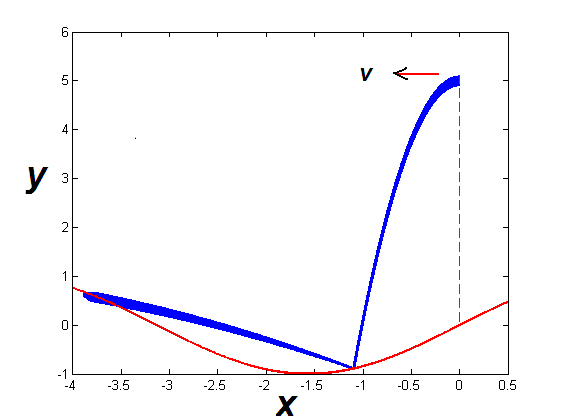}
\caption{Bouncing ball on a sine-waved surface}
\label{fig:bouncing}
\end{center}
\end{figure}

  As explained above, the HA model $\mathcal H_{\xx}$ of the bouncing ball can be given as
  \begin{itemize}
    \item $Q=\{q\}$; $X=\{x,y,v_x,v_y\}$;
    \item $E=\{e\}$ with $e = (q,q)$;
    \item $D_{\xx,q}\,\define\,y\geq \sin{x}$; $G_{\xx,e}\,\define\,y=\sin{x}$;
    \item $\Xi_{\xx,q}\,\define\,y\geq 4.9\wedge y\leq 5.1\wedge x=0\wedge v_x=-1\wedge v_y=0$;
    \item $\fb_{\xx,q}$ defines the ODE
            \begin{equation}\label{eqn:bouncing-ode-1}
            \left\{ \begin{array}{lll}
                \dot x & = & v_x\\
                \dot y & = & v_y\\
                \dot v_x & = & 0\\
                \dot v_y & = & -9.8
         \end{array} \right.\enspace ;
        \end{equation}
    \item $R_{\xx,e}(x,y,v_x,v_y)\,\define\,\{(x,y,v_x',v_y')\}$ with
      \begin{equation}\label{eqn:bouncing-reset-1}
            \left\{ \begin{array}{lll}
                v_x' & = & \frac{(\sin{x})^2\cdot v_x + 2(\cos{x})\cdot v_y}{1+(\cos{x})^2}\\
                v_y' & = & \frac{2(\cos{x})\cdot v_x - (\sin{x})^2 \cdot v_y}{1+(\cos{x})^2}\\
         \end{array} \right.\enspace .
        \end{equation}
  \end{itemize}

Note that in the above model, non-polynomial expressions exist in $D_q,G_e$ and $R_e$. By applying our proposed abstraction approach, we obtained the replacement equations $(u_1,u_2,u_3)=(\sin{x},$ $\cos{x},\frac{1}{1+(\cos{x})^2})$, and the PHS $\mathcal H_{\yy}$:
\begin{itemize}
  \item $Q$ and $E$ are the same as $\mathcal H_{\xx}$;
  \item $Y=\{x,y,v_x,v_y,u_1,u_2,u_3\}$;
  \item $D_{\yy,q}\,\define\,y\geq u_1$; $G_{\yy,e}\,\define\,y=u_1$; note that here we adopt (W4) when abstracting $D_{\xx,q}$ and $G_{\xx,e}$;
  \item $\Xi_{\yy,q}\,\define\,\Xi_{\xx,q}\wedge u_1=0\wedge u_2=1\wedge u_3=0.5$;
  \item $\fb_{\yy,q}$ defines the ODE
            \begin{equation}\label{eqn:bouncing-ode-1}
            \left\{ \begin{array}{lll}
                \dot x & = & v_x\\
                \dot y & = & v_y\\
                \dot v_x & = & 0\\
                \dot v_y & = & -9.8\\
                \dot u_1 & = & u_2v_x\\
                \dot u_2 & = & -u_1v_x\\
                \dot u_3 & = &  2u_1u_2u_3^2v_x
         \end{array} \right.\enspace ;
        \end{equation}
  \item $R_{\yy,e}(\yy)\,\define\,\{(x,y,v_x',v_y',u_1,u_2,u_3)\}$ with
      \begin{equation}\label{eqn:bouncing-reset-1}
            \left\{ \begin{array}{lll}
                v_x' & = & u_3\cdot(u_1^2\cdot v_x + 2u_2\cdot v_y)\\
                v_y' & = & u_3\cdot (2u_2\cdot v_x - u_1^2\cdot v_y)\\
         \end{array} \right.\enspace;
        \end{equation}
         note that $u_1,u_2,u_3$ are only related to $x$ which is reset to itself, and thus the resets of $u_1,u_2,u_3$ are identity mappings.
\end{itemize}

Once we get the polynomial abstraction $\mathcal H_{\yy}$, we can use existing tools for PHSs to analyze its behavior. Here we use the state-of-the-art nonlinear hybrid system analyzer Flow$^*$ \cite{CAS13}. The right picture in Figure \ref{fig:bouncing} shows the computed reachable set over-approximation (projected to the $x$-$y$ plane) of $\mathcal H_{\yy}$ within two jumps, which is also the reachable set over-approximation of $\mathcal H_{\xx}$ by Theorem \ref{thm:SimuInvHa}. Note that such an analysis would NOT have been possible {directly on $\mathcal H_{\xx}$} in Flow$^*$ since its current version does not support elementary functions in domains, guards, or reset functions\footnote{Although Flow$^*$ does support nonlinear continuous dynamics with non-polynomial terms such as sine, cosine, square root, etc.}.
\end{example}

\section{Application in Safety Verification of EHSs}\label{sec:application}

One of the mostly studied problems in the study of HSs is safety verification. Given a HS $\mathcal H$, a safety requirement for $\mathcal H$ can be specified as $\mathcal S\,\define\,\bigcup_{q\in Q}(\{q\}\times S_q)$ with $S_q\subseteq \RR^n$ the safe region of mode $q$. Alternatively, a safety property can be
given as a set of unsafe regions $\mathcal {US}\,\define\,\bigcup_{q\in Q}(\{q\}\times \bar {S_q})$ with $\bar {S_q}$ the complement of $S_q$ in $\RR^n$. The safety verification problem asks whether $\mathcal R_{\mathcal H}\subseteq \mathcal S$, or equivalently, whether $\mathcal R_{\mathcal H}\cap \mathcal {US}=\emptyset$.

The following result relates the safety verification problem of a HS $\mathcal H_{\xx}$ to that of $\mathcal H_{\yy}$ which simulates $\mathcal H_{\xx}$.
\begin{theorem}[Safety Relation]\label{thm:safety-relation}
   Let $\mathcal {US}_{\xx}\,\define\,\bigcup_{q}(\{q\}\times \bar S_{\xx,q})$ be a safety requirement of the HS $\mathcal H_{\xx}$. Suppose $\mathcal H_{\yy}$ simulates $\mathcal H_{\xx}$ via simulation maps $\{\Theta_q\mid q\in Q\}$. Let $\mathcal {US}_{\yy}\,\define\,\bigcup_{q}(\{q\}\times \bar S_{\yy,q})$ with $\bar S_{\yy,q}\supseteq \Theta_q(\bar S_{\xx,q})$. Then if $\mathcal H_{\yy}$ is safe w.r.t. $\mathcal {US}_{\yy}$, then  $\mathcal H_{\xx}$ is safe w.r.t. $\mathcal {US}_{\xx}$.
\end{theorem}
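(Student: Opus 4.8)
The plan is to obtain this as a corollary of the Invariant Preserving Property (Theorem~\ref{thm:SimuInvHa}) by reading safety as an invariance statement. First I would note that ``$\mathcal H_{\yy}$ is safe w.r.t.\ $\mathcal{US}_{\yy}$'' means $\mathcal R_{\mathcal H_{\yy}}\cap\mathcal{US}_{\yy}=\emptyset$, i.e.\ $\mathcal R_{\mathcal H_{\yy}}\subseteq\mathcal I_{\yy}$ where $\mathcal I_{\yy}\,\define\,\bigcup_{q\in Q}(\{q\}\times I_{\yy,q})$ with $I_{\yy,q}\,\define\,\RR^{|\yy|}\setminus\bar S_{\yy,q}$. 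By Definition~\ref{dfn:inv}, $\mathcal I_{\yy}$ is thus an invariant of $\mathcal H_{\yy}$. Since $\mathcal H_{\yy}$ simulates $\mathcal H_{\xx}$ via $\{\Theta_q\mid q\in Q\}$, Theorem~\ref{thm:SimuInvHa} yields that $\mathcal I_{\xx}\,\define\,\bigcup_{q\in Q}(\{q\}\times I_{\xx,q})$ with $I_{\xx,q}\,\define\,\Theta_q^{-1}(I_{\yy,q})$ is an invariant of $\mathcal H_{\xx}$, so $\mathcal R_{\mathcal H_{\xx}}\subseteq\mathcal I_{\xx}$.

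It then suffices to show $\mathcal I_{\xx}\cap\mathcal{US}_{\xx}=\emptyset$. Take any $(q,\xx)\in\mathcal I_{\xx}$; then $\xx\in D_{\xx,q}$ and $\Theta_q(\xx)\in I_{\yy,q}$, i.e.\ $\Theta_q(\xx)\notin\bar S_{\yy,q}$. If we had $\xx\in\bar S_{\xx,q}$, then by the hypothesis $\bar S_{\yy,q}\supseteq\Theta_q(\bar S_{\xx,q})$ we would get $\Theta_q(\xx)\in\bar S_{\yy,q}$, a contradiction. Hence $\xx\notin\bar S_{\xx,q}$, i.e.\ $(q,\xx)\notin\mathcal{US}_{\xx}$. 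Combining, $\mathcal R_{\mathcal H_{\xx}}\cap\mathcal{US}_{\xx}\subseteq\mathcal I_{\xx}\cap\mathcal{US}_{\xx}=\emptyset$, which is the safety of $\mathcal H_{\xx}$ w.r.t.\ $\mathcal{US}_{\xx}$.

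A self-contained alternative avoids Theorem~\ref{thm:SimuInvHa}: by induction on the length of a reachability witness for $\mathcal H_{\xx}$ one shows that if $(q,\xx)\in\mathcal R_{\mathcal H_{\xx}}$ then $(q,\Theta_q(\xx))\in\mathcal R_{\mathcal H_{\yy}}$, the discrete-jump case using $\Theta_q(G_{\xx,e})\subseteq G_{\yy,e}$ and $\Theta_{q'}(R_{\xx,e}(\xx))\subseteq R_{\yy,e}(\Theta_q(\xx))$ from Definition~\ref{dfn:simu-HS}, and the continuous-evolution case using that $\Theta_q\circ\xx(t)$ is a trajectory of $(\Xi_{\yy,q},\fb_{\yy,q},D_{\yy,q})$ staying in $D_{\yy,q}$ because $\Theta_q(D_{\xx,q})\subseteq D_{\yy,q}$; a reachable unsafe $(q,\xx)$ of $\mathcal H_{\xx}$ then maps to a reachable state $(q,\Theta_q(\xx))\in\mathcal{US}_{\yy}$ of $\mathcal H_{\yy}$, contradicting its safety. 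I do not anticipate a real obstacle: the substance is already packaged in Theorem~\ref{thm:SimuInvHa} (resp.\ Definition~\ref{dfn:simu-HS}). The only point requiring attention is that each $\Theta_q$ is defined only on $D_{\xx,q}$ and is not assumed injective or surjective, so the complements of the safe sets must be taken in the correct ambient spaces and the argument must rely solely on the forward inclusion $\Theta_q(\bar S_{\xx,q})\subseteq\bar S_{\yy,q}$ rather than on any property of $\Theta_q^{-1}$ applied to the safe region itself.
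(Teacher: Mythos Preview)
Your proof is correct and follows essentially the same route as the paper: invoke Theorem~\ref{thm:SimuInvHa} to pull back an invariant of $\mathcal H_{\yy}$ to one of $\mathcal H_{\xx}$, and then conclude disjointness from $\mathcal{US}_{\xx}$ using the hypothesis $\Theta_q(\bar S_{\xx,q})\subseteq\bar S_{\yy,q}$ together with elementary preimage reasoning. The only cosmetic difference is the choice of invariant being pulled back---the paper uses the reachable set $\mathcal R_{\mathcal H_{\yy}}$ itself (trivially an invariant), whereas you use the safe region $\RR^{|\yy|}\setminus\bar S_{\yy,q}$; both choices yield the result by the same set-theoretic steps.
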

\begin{proof}
  Let $\mathcal R_{\xx}$ and $\mathcal R_{\yy}\,\define\,\bigcup_{q}(\{q\}\times H_{\yy,q})$ denote the reachable sets of $\mathcal H_{\xx}$ and $\mathcal H_{\yy}$ respectively. Suppose $\mathcal R_{\yy}\cap \mathcal {US}_{\yy}=\emptyset$, i.e. $H_{\yy,q}\cap \bar S_{\yy,q}=\emptyset$ for any $q\in Q$.
  Thus $\Theta_q^{-1}(H_{\yy,q})$ $\cap\,\Theta_q^{-1}(\bar S_{\yy,q})=\emptyset$, which implies $$\Theta_q^{-1}(H_{\yy,q})\cap\,\Theta_q^{-1}\big(\Theta_q(\bar S_{\xx,q})\big)=\emptyset\enspace.$$
  Therefore $\Theta_q^{-1}(H_{\yy,q})\cap\bar S_{\xx,q}=\emptyset$.
  By Theorem \ref{thm:SimuInvHa} we get
  $\mathcal R_{\xx}\subseteq \bigcup_{q\in Q}\big(\{q\}\times \Theta_q^{-1}(H_{\yy,q})\big)$.
  Thus $\mathcal R_{\xx}\cap \mathcal {US}_{\xx}=\emptyset$.\qed
\end{proof}

Note that if the safety properties of EHSs are not in polynomial forms but contain elementary functions, we can replace the non-polynomial terms by new variables when constructing the simulation map, as we do for the EHSs themselves.

Theorem \ref{thm:safety-relation} allows us to take advantage of constraint-based approaches for PHSs to verify safety properties of EHSs.
In the rest of this section, we show how to perform safety verification for EHSs by combining the previous proposed polynomial abstraction method with constraint-based verification techniques for PHSs.

\subsection{Generating Polynomial Invariants}

In this and next subsections, for simplicity, we will use EDSs as special cases of EHSs to illustrate how to generate inductive invariants for safety verification of EHSs.

Given an EDS $\mathcal C_{\xx}\,\define\,(\Xi_{\xx},\fb_{\xx},D_{\xx})$ and an unsafe region $\bar S_{\xx}$, we first construct a PDS $\mathcal C_{\yy}\,\define\,(\Xi_{\yy},\fb_{\yy},D_{\yy})$ that simulates $\mathcal C_{\xx}$, as well as the polynomial abstraction $\bar S_{\yy}$ of $\bar S_{\xx}$.
According to Theorem \ref{thm:SimuInvHa} and \ref{thm:safety-relation},
if we can find a semi-algebraic\footnote{A set $A\subseteq \mathbb R^n$ is called {\emph{semi-algebraic}} if it can be defined by Boolean combinations of polynomial equations or inequalities.} inductive invariant $P(\yy)=P(\xx,\vv)$ for $\mathcal C_{\yy}$ with $\vv=\Gamma(\xx)$ the replacement equations, such that $P(\xx,\vv)$ is a certificate of the safety of $\mathcal C_{\yy}$ w.r.t. $\bar S_{\yy}$, then $P(\xx,\Gamma(\xx))$ is an inductive invariant certificate of the safety of $\mathcal C_{\xx}$ w.r.t. $\bar S_{\xx}$. If $P(\xx,\vv)$ does contain variables $\vv$, then $P(\xx,\Gamma(\xx))$ gives an elementary invariant of $\mathcal C_{\xx}$; otherwise $P(\xx,\Gamma(\xx))$ is just a polynomial invariant.

The form of the invariant $P(\yy)$ of $\mathcal C_{\yy}$ determines not only what kinds of invariants we can get for $\mathcal C_{\xx}$, but also the selection of abstraction ways (W1)-(W4) in Section \ref{subsec:poly-eds}.
To see this, we first assume for $\mathcal C_{\yy}$ a polynomial invariant candidate $P(\uu,\xx)\,\define\,p(\uu,\xx)\leq 0$ without the fresh variables $\vv$, where $\uu$ is the vector of parameters to be determined. Then a typical set of constraints on $\uu$ given by the constraint-based verification approach could be as follows:
\begin{itemize}
  \item[(C1)] $\forall \xx\forall \vv. (\Xi_{\yy}\longrightarrow P(\uu,\xx))$;
  \item[(C2)] $\forall \xx\forall\vv. (D_{\yy} \longrightarrow \nabla p(\uu,\xx) \cdot \fb_{\yy} \leq 0)$;
  \item[(C3)] $\forall \xx\forall \vv. (P(\uu,\xx) \longrightarrow \neg \bar S_{\yy})$.
\end{itemize}
By (W1)-(W4), it is easy to check that $(\exists \vv. \Xi_{\yy})\Longleftrightarrow \Xi_{\xx}$. Then we can prove that
(C1) is equivalent to $\forall \xx.(\Xi_{\xx}\longrightarrow P(\uu,\xx))$.
Similarly, by $(\exists \vv. \bar S_{\yy})\Longleftrightarrow \bar S_{\xx}$, we can prove that (C3) is equivalent to $\forall \xx.(P(\uu,\xx) \longrightarrow \neg \bar S_{\xx})$. Therefore we can conclude that it is sufficient to adopt (W4) for the abstraction of $\Xi_{\xx}$ and $\bar S_{\xx}$. The gradient $\nabla p(\uu,\xx)$ in (C2) is computed w.r.t. variables $\yy=(\xx,\vv)$. Since $p(\uu,\xx)$ does not contain $\vv$, all the partial derivatives of $p(\uu,\xx)$ w.r.t. $\vv$ are zero. The consequence of this fact is twofold: first, only those components of $\fb_{\yy}$ that define the derivatives of $\xx$, i.e. $\fb_{\xx}\llbracket\vv/ \Gamma(\xx) \rrbracket$, are relative to the computation of $p(\uu,\xx) \cdot \fb_{\yy}$, which means we do not even need to compute the derivatives of the fresh variables $\vv$ when constructing $\fb_{\yy}$; second, only those fresh variables occurring in $\fb_{\xx}\llbracket\vv/ \Gamma(\xx) \rrbracket$ will occur in $p(\uu,\xx) \cdot \fb_{\yy}$, and then from (C2) we can prove that when constructing $D_{\yy}$, the variables do not exist in $p(\uu,\xx) \cdot \fb_{\yy}$ can be simply abstracted away.

In summary, assuming an invariant template $P(\uu,\xx)$ without fresh variables $\vv$ can greatly simplify the construction of $\mathcal C_{\yy}$, and enables us to generate polynomial invariants for $\mathcal C_{\xx}$.

\begin{figure}
\begin{center}
\includegraphics[width=1.6in,height=1.5in]{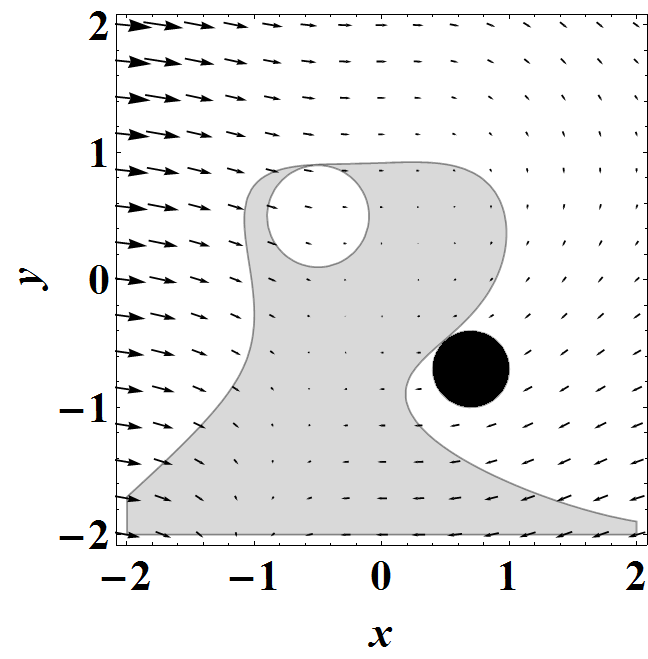}
\hspace{1cm}
\includegraphics[width=1.6in,height=1.5in]{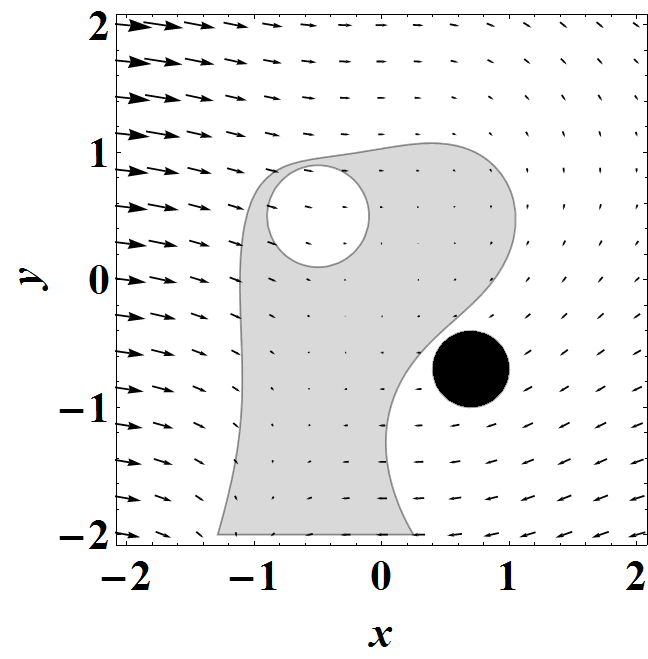}
\caption{Comparison of polynomial and elementary inductive invariants}
\label{fig:inv-trans}
\end{center}

\end{figure}

\begin{example}\label{eg:eds-verify-1}
 Consider the EDS $\mathcal C_{\xx}$ in Example \ref{eg:eds-pds}. We will try to generate a polynomial inductive invariant to verify the safety of $\mathcal C_{\xx}$ w.r.t. an unsafe region $\bar S_{\xx}\,\define\,(x-0.7)^2+(y+0.7)^2-0.09\leq 0$. By the above discussion, the PDS abstraction $\mathcal C_{\yy}$ of $\mathcal C_{\xx}$ can be defined by
 $$\mathcal C_{\yy}\quad\define\quad\big(\Xi_{\xx},\fb_{\xx}\llbracket\vv/\Gamma(\xx)\rrbracket,D_{\xx}\wedge (\ref{eqn:taylor-sin-x})\wedge (\ref{eqn:taylor-exp-x}) \big)$$ with $\vv=\Gamma(\xx)$ given by (\ref{eqn:eg-eds-gamma}). The unsafe region for $\mathcal C_{\yy}$ is $\bar S_{\yy}\,\define\, \bar S_{\xx}$.

 By applying the SOS-relaxation-based invariant generation approach \cite{PJP07,KHSH13} with a polynomial template $p(\uu,x,y)\leq 0$ of degree 5 (in $x,y$) and using the Matlab-based tool YALMIP \cite{Yalmip} and SeDuMi \cite{SeDuMi} (or SDPT3 \cite{Sdpt3}), we successfully generated an invariant that verifies $\neg \bar S_{\xx}$ for $\mathcal C_{\xx}$.
 Please see the left part of Figure~\ref{fig:inv-trans} for an illustration of $\fb_{\xx}$ (the black arrows), $D_{\xx}$ (the outer white box), the synthesized invariant $p(x,y)\leq 0$ (the grey area with curved boundary), $\Xi_{\xx}$ (the white circle inside the invariant) and $\bar S_{\xx}$ (the black circle outside the invariant). The explicit form of $p(x,y)$ is:
 \begin{eqnarray*}
p(x,y)  &:=  & \scriptstyle
-29.5258683+2.7905 x-15.4285 y+7.7870 x^2-20.4040 x y\\
& & \scriptstyle +22.4031 y^2+14.0762 x^3-18.7539 x^2 y+41.8913 x y^2+5.9623 y^3\\
& & \scriptstyle +25.8881 x^4+4.5276 x^3 y+2.6340 x^2 y^2-21.2871 x y^3+5.6462 y^4\\
& & \scriptstyle -9.8303 x^5+0.8716 x^4 y+1.4942 x^3 y^2+9.9083 x^2 y^3-11.0499 x y^4\\
& & \scriptstyle +24.5758 y^5 \qquad .
\end{eqnarray*}

\end{example}

\subsection{Generating Elementary Invariants}
Now we show how to generate elementary invariants for $\mathcal C_{\xx}$ in Example \ref{eg:eds-verify-1}.
\begin{example}\label{eg:eds-verify-2}
  Consider the EDS $\mathcal C_{\xx}$ and unsafe region $\bar S_{\xx}$ in Example \ref{eg:eds-verify-1}. This time we try to generate an inductive invariant for $\mathcal C_{\yy}$ using the template $p(\uu,\xx,\vv)\leq 0$ with all the variables $\vv$ included. According to constraints similar to (C1)-(C3), it requires a more refined abstraction of $\mathcal C_{\xx}$ to reflect the relations between $\xx$ and $\vv$. Here we adopt (W2) for the abstraction of $\Xi_{\xx},D_{\xx}$ and $\bar S_{\xx}$. We define $D_{\yy}$ to be the same one as in Example \ref{eg:eds-pds}. From $\Xi_{\xx}$ it can be deduced that $(x,y)\in B_{\Xi}\,\define\, [-0.9,-0.1]\times [0.1,0.9]$ for any $(x,y)\in\Xi_{\xx}$. Then we can compute the Taylor polynomials of $\vv=\Gamma(\xx)$ over $B_{\Xi}$, and thus get $\Xi_{\yy}$. The abstraction $\bar S_{\yy}$ of $\bar S_{\xx}$ can be obtained similarly. The vector field $\fb_{\yy}$ is given by (\ref{eqn:eg-transode-poly}).

  Using a template $p(\uu,\xx,\vv)\leq 0$ with $p(\uu,\xx,\vv)$ a parametric polynomial of degree 3 (in $\xx,\vv$), we finally obtained an invariant $p(x,y,v_1,v_2,v_3)\leq 0$  that verifies $\neg \bar S_{\yy}$ for $\mathcal C_{\yy}$, which means $p(x,y,\sin{x},e^{-x},\cos{x})\leq 0$ is an invariant of $\mathcal C_{\xx}$ that verifies $\neg \bar S_{\xx}$. The right part of Figure \ref{fig:inv-trans} is an illustration of
  $p(x,y,\sin{x},e^{-x},\cos{x})\leq 0$. The explicit form of $p(x,y,v_1,v_2,v_3)$ is:
  \begin{eqnarray*}
  p & := &  \scriptstyle   -4.955995973+2.6956 x-7.7162 y+1.3633 v_1-1.1243 v_2-1.0806 v_3+0.6966 x^2-8.9155 x y+5.8828 y^2\\
  & & \scriptstyle +3.0691 x v_1-3.6545 y v_1+0.2592 v_1^2-3.3022 x v_2+1.4964 y v_2-0.7498 v_1 v_2-4.2837 v_2^2-0.2079 x v_3\\
  & & \scriptstyle -7.7557 y v_3-1.5121 v_1 v_3+1.5754 v_2 v_3-1.3813 v_3^2-0.0353 x^3-0.3128 x^2 y+0.9184 x y^2+6.6938 y^3\\
  & & \scriptstyle
  -0.1410 x^2 v_1+3.1509 x y v_1+1.8136 y^2 v_1+6.5973 x v_1^2+7.7242 y v_1^2+1.7114 v_1^3-1.0877 x^2 v_2+4.4452 x y v_2\\
  & & \scriptstyle +1.2358 y^2 v_2+1.3919 x v_1 v_2+7.9981 y v_1 v_2+2.5635 v_1^2 v_2-0.8835 x v_2^2+1.4900 y v_2^2+0.0392 v_1 v_2^2+1.1281 v_2^3\\
  & & \scriptstyle -1.8619 x^2 v_3 -2.4300 x y v_3+2.2032 y^2 v_3-2.5384 x v_1 v_3-6.2048 y v_1 v_3-4.9447 v_1^2 v_3+1.6193 x v_2 v_3\\
  & & \scriptstyle +1.2933 y v_2 v_3
 -0.5207 v_1 v_2 v_3-0.2498 v_2^2 v_3+5.8866 x v_3^2+5.1296 y v_3^2+0.7890 v_1 v_3^2+2.0905 v_2 v_3^2-2.3259 v_3^3 \enspace .
\end{eqnarray*}
%
\end{example}

We can see that the elementary invariant is sharper than the polynomial invariant and separates better from the unsafe region. This indicates that by allowing non-polynomial terms in templates, invariants of higher quality may be generated and thus increases the possibility of verifying safety properties of EHSs. Moreover, it also suggests that even for purely polynomial systems, one could assume any kind of elementary terms in a predefined template when generating invariants, which gives a more general method than \cite{RMM12,GJPS14} for generating elementary invariants for PHSs.

%

\subsection{More Experiments}
We have implemented the proposed abstraction approach (not including the part on abstraction of replacement equations) and experimented with it using the following examples on safety verification for EHSs. The formal abstraction algorithms can be found in the appendix, and all the input files for the experiments can be obtained at {\scriptsize \sf {http://lcs.ios.ac.cn/\%7Ezoul/casestudies/fm2015.zip}}

\begin{example}[HIV Transmission]\label{eg:hiv}
  The following continuous dynamics, with the assumption that there is no recruitment of population, has been developed to model HIV transmission \cite{Anderson88}
\begin{equation}\label{eqn:eg-hiv}
\fb \,\define\, \left\{ \begin{array}{lll}
\dot u_1 & = & -\frac{\beta c u_1 u_2} {u_1+u_2+u_3}-\mu u_1\\
\dot u_2 & = & \frac{\beta c u_1 u_2} {u_1+u_2+u_3}-(\mu+\nu)u_2\\
\dot u_3 & = & \nu u_2 -\alpha u_3
\end{array} \right.\enspace ,
\end{equation}
where $u_1(t),u_2(t),u_3(t)$ denote the part of population that is HIV susceptible, HIV infected, and that has AIDS respectively, $\beta$ is the possibility of infection per partner contact, $c$ is the rate of partner change, $\mu$ is the death rate of non-AIDS population, $\alpha$ is the death rate of AIDS patients, and $\nu$ is the rate at which HIV infected people develop AIDS. Note that the dynamics involves non-polynomial term $\frac{1}{u_1+u_2+u_3}$. In this paper, the parameters are chosen to be $\beta=0.2,c=10,\mu=0.008,\alpha=0.95,\nu=0.1$.
We want to verify that with the initial set
$$\Xi\,\define\,u_1\in[9.985,9.995]\wedge u_2\in[0.005,0.015]\wedge u_3\in[0,0.003],$$
the population of AIDS patients alive will always be below 1 (the population is measured in thousands). That is, the system $(\Xi,\fb, D)$ satisfies $S\,\define\, u_3\leq 1$, where $D\,\define\,u_1\geq 0\wedge u_2\geq 0\wedge u_3\geq 0\wedge 0< u_1+u_2+u_3\leq 10.013 \,.\footnote{According to dynamics (\ref{eqn:eg-hiv}), the entire population is non-increasing, so $u_1+u_2+u_3$ has an upper bound.}$
\end{example}

\begin{figure}
\begin{center}
\includegraphics[width=3in,height=1.1in]{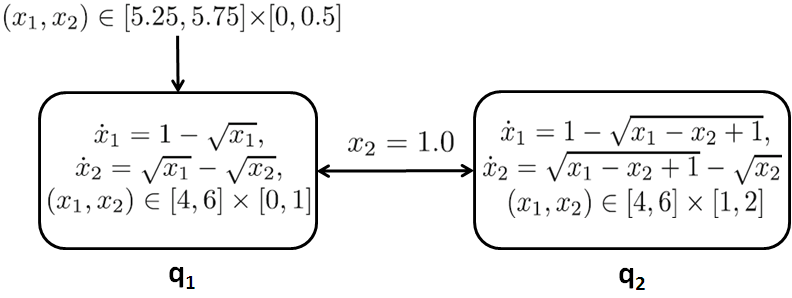}
\caption{HA model of the two-tanks system}\label{fig:tanks}
\end{center}

\end{figure}

\begin{example}[Two-Tanks]\label{eg:tank}
The two-tanks system shown in Figure \ref{fig:tanks} comes from \cite{SKHP97} and has been studied in \cite{RS07,Hydlogic,iSAT-ODE} as a benchmark for safety verification of hybrid systems. It models two connected tanks, the liquid levels of which are denoted by $x_1$ and $x_2$ respectively. The system switches from mode $q_1$ (or $q_2$) to $q_2$ (or $q_1$) when $x_2$ reaches 1 at $q_1$ (or $q_2$). The system's dynamics involve non-polynomial terms such as $\sqrt{x_1}$ or $\sqrt{x_1-x_2+1}$. The verification objective is to show that starting from mode $q_1$ with the initial set
$\Xi_{q_1}\,\define\, 5.25\leq x_1\leq 5.75\wedge 0\leq x_2\leq 0.5$, the system will never reach the unsafe set $\bar S_{q_1}\,\define\,(x_1-4.25)^2+(x_2-0.25)^2-0.0625\leq 0$ when staying at mode $q_1$.
\end{example}

\begin{example}[Lunar Lander]\label{eg:lunar}
Consider a real-world example of the guidance and control of a lunar lander
\cite{ZYZG14}, as illustrated by Figure \ref{fig:hcs}.
The dynamics of the lander is given by
\begin{equation}\label{eqn:eg-hcs}
\fb \,\define\, \left\{ \begin{array}{lll}
\dot v & = & \frac{F_c}{m}-1.622\\
\dot m & = & -\frac{F_c}{2500} \\
\dot F_c & = & 0\\
\dot t & = & 1
\end{array} \right. \enspace ,
\end{equation}
where
$v$ and $m$ denote the vertical velocity and mass of the lunar lander;
$F_c$ denotes the thrust imposed on the lander, which is kept constant during one sampling cycle of length 0.128 seconds; at each sampling point, $F_c$ is updated according to the guidance law shown in the right part of Figure \ref{fig:hcs}.
\begin{figure}[h]
\begin{center}
\includegraphics[width=0.9in,height=1.2in]{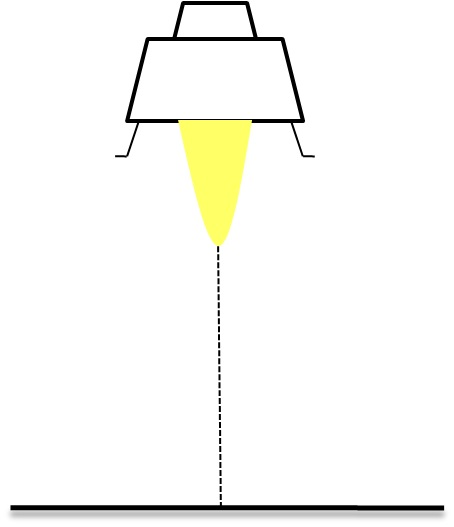}
\hspace{.6cm}
\includegraphics[width=1.8in,height=1.25in]{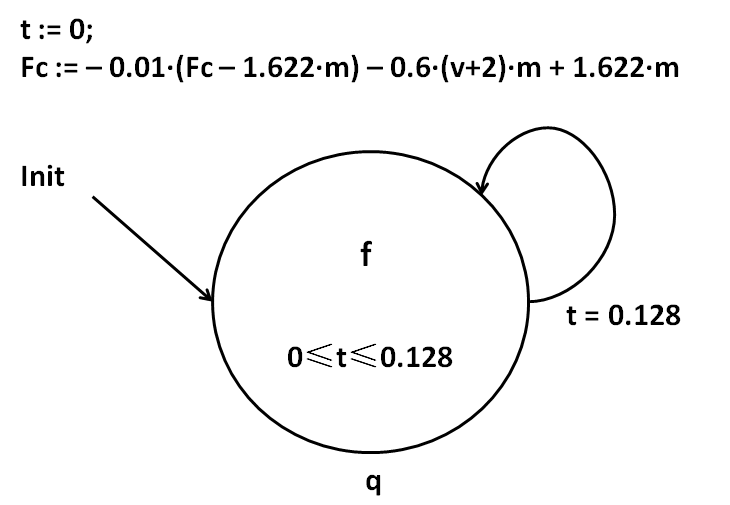}
\caption{The lunar lander and its guidance-control system}
\label{fig:hcs}
\end{center}
\end{figure}Note that the derivative of $v$ involves non-polynomial expression $\frac{1}{m}$.
We want to verify that with the initial condition $t=0$s, $v=-2$m/s, $m=1250$kg, $F_c=2027.5$N, the vertical velocity of the lunar lander will be kept around the target velocity $-$2m/s, i.e. $|v-(-2)|\leq \varepsilon$, where $\varepsilon=0.05$ is the specified bound for fluctuation of $v$.
\end{example}

Using the proposed abstraction method and the SOS-relaxation-based invariant generation method, we have successfully verified all the above 3 examples. The time costs on the platform with Intel Core i5-3470 CPU and 4GB RAM running Windows 7 are shown in Table \ref{tbl:timing}.
\begin{table}[h]
\begin{center}
  \caption{Time costs of invariant generation and safety verification for EHSs}\label{tbl:timing}
\begin{tabular}{|c|c|c|c|c|c|}
  \hline
  example  & E.g. \ref{eg:eds-verify-1} & E.g. \ref{eg:eds-verify-2} & E.g. \ref{eg:hiv} & E.g. \ref{eg:tank} & E.g. \ref{eg:lunar} \\
  \hline
  time cost (s) & 1.324 & 7.994 & 5.186 & 0.977 & 2.645\\
  \hline
\end{tabular}
\end{center}
\end{table}

Besides, we have also compared with the performances of the EHS verification tools HSOLVER \cite{RS07}, Flow$^*$ \cite{CAS13}, dReach \cite{GKC13-dreach} and iSAT-ODE \cite{iSAT-ODE} on these examples.\footnote{Note that since Flow$^*$, dReach and iSAT-ODE can only do BMC, we have assumed a time bound of 20s and 10s resp. for E.g. \ref{eg:eds-verify-1} and \ref{eg:hiv}, and a jump bound of 40 steps and 100 steps resp. for E.g. \ref{eg:tank} and \ref{eg:lunar}.} The results are obtained on the same platform as above except for running Ubuntu Linux 14.04. In Table \ref{tbl:comparison}, time is measured in seconds; $-$ means that the verification fails, either because of abnormal termination due to error inflation, or because of non-termination within reasonable amount of time (several hours).

\begin{table}
\begin{center}
  \caption{Verification results of different methods}\label{tbl:comparison}
\begin{tabular}{|c|c|c|c|c|c|c|}
  \hline
                             & EHS2PHS          & HSOLVER  & Flow$^*$  & dReach  & iSAT-ODE \\
  \hline
  E.g. \ref{eg:eds-verify-1} & 1.324            & 0.723    & $-$       & $-$     & $-$      \\
  E.g. \ref{eg:hiv}          & 5.186            & $-$      & $-$       & $-$     & $-$      \\
  E.g. \ref{eg:tank}         & 0.977            & 0.452    & 76.880    & 21.949  & 0.988    \\
  E.g. \ref{eg:lunar}        & 2.645            & $-$      & 20.238    & $-$     & 63.648   \\
  \hline
\end{tabular}
\end{center}
\end{table}
%

From Table \ref{tbl:comparison} we can see that the time costs of the proposed abstraction approach are all acceptable, whereas there do exist examples that existing approaches cannot solve effectively.

\section{Conclusions}\label{sec:conclusion}
In this paper, we presented an approach to reducing an EHS to a PHS by variable transformation, and established the simulation relation between them, so that safety verification of the EHS can be reduced to that of the corresponding PHS. Thus our work enables all the well-established techniques for PHS verification to be applicable to EHSs. In particular, combined with invariant-based approach to safety verification for PHSs, it provides the possibility of overcoming the limitations of existing EHS verification approaches. Experimental results on real-world examples indicated the effectiveness of our approach.

A possible drawback of the proposed approach is that the SOS-based method may cause an incorrect invariant to be generated due to numerical computation errors. To overcome this, we have verified all the synthesized invariants posteriorly using symbolic computation tools.


\bibliographystyle{splncs03}
\bibliography{Myrefs-fm15}

\newpage
\appendix
\section*{Abstraction Algorithms}

\begin{algorithm}[h]
\caption{Reducing an elementary expression to a polynomial one
  (\textbf{VT}($\textit{expr}$, $\textit{eqs}$))}
\label{alg:subst}
\begin{algorithmic}[1]
\REQUIRE An elementary expression $\textit{expr}$ and a set of equations $\textit{eqs}$ as input
\ENSURE The returned expression is polynomial,
    and equals to the input expression in the context of equations $\textit{eqs}$
\IF{$\textit{expr} = c$ or $\textit{expr} = x$}
  \RETURN $(\textit{expr},\textit{eqs})$;
\ELSIF{$\textit{expr} = \frac{\textit{expr}_1}{\textit{expr}_2}$}
  \STATE{$(\textit{expr}_2,\textit{eqs}) = \textbf{VT}(\textit{expr}_2,eqs)$;\quad
  \textbf{return} $\textbf{VT}(\textit{expr}_1*\textit{new\!Var}$,
    $\textit{eqs}.\textbf{add}(\textit{new\!Var},\frac{1}{\textit{expr}_2})$);}
\ELSIF{$\textit{expr} = \textit{expr}_1^{\frac{\textit{n}_1}{\textit{n}_2}}$}
  \STATE{$(\textit{expr}_1,\textit{eqs}) = \textbf{VT}(\textit{expr}_1,eqs)$;\quad
  \textbf{return} ({$\textit{new\!Var}^{\textit{n}_1}$}, $\textit{eqs}.\textbf{add}(\textit{new\!Var},\textit{expr}_1
    ^{\frac{1}{\textit{n}_2}})$); }
\ELSIF{$\textit{expr} = e^{\textit{expr}_1}$}
  \STATE{$(\textit{expr}_1,\textit{eqs}) = \textbf{VT}(\textit{expr}_1,eqs)$;\quad
  \textbf{return} ($\textit{new\!Var}$, $\textit{eqs}.\textbf{add}(\textit{new\!Var},e^{\textit{expr}_1})$); }
\ELSIF{$\textit{expr} = \ln(\textit{expr}_1)$}
  \STATE{$(\textit{expr}_1,\textit{eqs}) = \textbf{VT}(\textit{expr}_1,eqs)$;\quad
  \textbf{return} ($\textit{new\!Var}$, $\textit{eqs}.\textbf{add}(\textit{new\!Var},
    \ln(\textit{expr}_1))$); }
\ELSIF{$\textit{expr} = \sin(\textit{expr}_1)$}
  \STATE{$(\textit{expr}_1,\textit{eqs}) = \textbf{VT}(\textit{expr}_1,eqs)$;\quad
  \textbf{return} ($\textit{new\!Var}$, $\textit{eqs}.\textbf{add}(\textit{new\!Var},
    \sin(\textit{expr}_1))$); }
\ELSIF{$\textit{expr} = \cos(\textit{expr}_1)$}
  \STATE{$(\textit{expr}_1,\textit{eqs}) = \textbf{VT}(\textit{expr}_1,eqs)$;\quad
  \textbf{return} ($\textit{new\!Var}$, $\textit{eqs}.\textbf{add}(\textit{new\!Var},
    \cos(\textit{expr}_1))$); }
\ELSIF {$\textit{expr} = \textit{expr}_1 + \textit{expr}_1$}
  \STATE{$(\textit{expr}_1,\textit{eqs}) = \textbf{VT}(\textit{expr}_1,eqs)$;\quad
  $(\textit{expr}_2,\textit{eqs}) = \textbf{VT}(\textit{expr}_2,eqs)$;}
   \RETURN ($\textit{expr}_1 + \textit{expr}_2$,$\textit{eqs}$); \quad
\ELSIF {$\textit{expr} = \textit{expr}_1 - \textit{expr}_1$}
  \STATE{$(\textit{expr}_1,\textit{eqs}) = \textbf{VT}(\textit{expr}_1,eqs)$;\quad
  $(\textit{expr}_2,\textit{eqs}) = \textbf{VT}(\textit{expr}_2,eqs)$;}
   \RETURN ($\textit{expr}_1 - \textit{expr}_2$,$\textit{eqs}$); \quad
\ELSE {
  \STATE{$(\textit{expr}_1,\textit{eqs}) = \textbf{VT}(\textit{expr}_1,eqs)$;\quad
  $(\textit{expr}_2,\textit{eqs}) = \textbf{VT}(\textit{expr}_2,eqs)$;}
  \RETURN ($\textit{expr}_1 \times \textit{expr}_2$, $\textit{eqs}$);}
\ENDIF
\end{algorithmic}
\end{algorithm}

In Algorithm \ref{alg:subst},
  $\textit{new\!Var}$ denotes  a fresh variable, and $\textit{eqs}$
  records the replacements during the variable transformation.

\begin{algorithm}[h]
\caption{Updating the dynamical system according to the replacement equations $\textit{eqs}$ (\textbf{U}($\textit{odes}$, $\textit{eqs}$))}
\label{alg:updateode}
\begin{algorithmic}[1]
\REQUIRE Polynomial differential equations $odes$ and a set of equations $\textit{eqs}$ as input (where all expressions in $\textit{eqs}$ are polynomial except the outermost operator)
\ENSURE The resulting polynomial differential equations simulate the initial $odes$ and $\textit{eqs}$
\FOR{$(\textit{var},\textit{expr})$ \textbf{in} $\textit{eqs}$}
  \IF{$\textit{expr} = \frac{1}{\textit{expr}_2}$}
    \STATE{$odes.\textbf{add}(\textit{var},-\textit{var}^2*\dot{\textit{expr}_2})$;}
  \ELSIF{$\textit{expr} = \textit{expr}_1^{\frac{1}{\textit{n}_2}}$}
    \STATE{$\textit{eqs}.\textbf{add}(\textit{new\!Var},1/\textit{expr})$; \quad
     $odes.\textbf{add}(\textit{var},\frac{1}{\textit{n}_2}*\textit{new\!Var}^{\textit{n}_2-1}*\dot{\textit{expr}_1})$;}
  \ELSIF{$\textit{expr} = e^{\textit{expr}_1}$}
    \STATE{$odes.\textbf{add}(\textit{var},\textit{var}*\dot{\textit{expr}_1})$;}
  \ELSIF{$\textit{expr} = \ln(\textit{expr}_1)$}
    \STATE{$\textit{eqs}.\textbf{add}(\textit{new\!Var},\frac{1}{\textit{expr}_1})$; \quad
     $odes.\textbf{add}(\textit{var},\textit{new\!Var}*\dot{\textit{expr}_1})$;}
  \ELSIF{$\textit{expr} = \sin(\textit{expr}_1)$}
    \STATE{$\textit{eqs}.\textbf{add}(\textit{new\!Var},\cos(\textit{expr}_1))$; \quad
     $odes.\textbf{add}(\textit{var},\textit{new\!Var}*\dot{\textit{expr}_1})$;}
  \ELSIF{$\textit{expr} = \cos(\textit{expr}_1)$}
    \STATE{$\textit{eqs}.\textbf{add}(\textit{new\!Var},\sin(\textit{expr}_1))$; \quad
    $odes.\textbf{add}(\textit{var},-\textit{new\!Var}*\dot{\textit{expr}_1})$;}
  \ELSE
    \STATE{The algorithm should not run this branch;}
  \ENDIF
\ENDFOR
\RETURN $odes$;
\end{algorithmic}
\end{algorithm}

In Algorithm \ref{alg:updateode},
  $\textbf{op}$, $\textbf{left}$, and $\textbf{right}$ returns the outermost operation,
  and its left and right operands for a given expression, respectively.
    $\textbf{left}$ and $\textbf{right}$ return the operand in case the outmost operation is one ary;
  $\textit{new\!Var}$ denotes a fresh variable. 
Algorithm \ref{alg:updateode} must terminate,
  because the number of elements of $\textit{eqs}$ can only increase finite times, obviously, no more than the number of the subexpressions of the EDS.

\begin{algorithm}[h]
\caption{Transforming elementary ODEs to polynomial ODEs (\textbf{TransEODEs}($\textit{odes}$, $\textit{eqs}$))}
\label{alg:transodes}
\begin{algorithmic}[1]
\REQUIRE ODEs $\textit{odes}$ and a list of replacement equations $\textit{eqs}$ as input
\ENSURE The resulting ODEs are polynomial
\FOR {$ode$ \textbf{in} $odes$}
  \FOR{$exp$ \textbf{in} $\textbf{omExp}(ode)$}
    \STATE{($\textit{exp}$, $\textit{eqs}$) =
      \textbf{VT}($\textit{exp}$, $\textit{eqs}$);}
  \ENDFOR
\ENDFOR
\STATE{\textbf{return} \textbf{U}($\textit{odes}$, $\textit{eqs}$);}
\end{algorithmic}
\end{algorithm}

In algorithm \ref{alg:transodes}, $\textbf{omExp}(ode)$ returns the set of  the outmost expressions of $ode$, and \textbf{VT} and \textbf{U} call Algorithm 1 and 2, respectively.

\begin{algorithm}[h]
\caption{Transforming elementary hybrid systems (\textbf{TransEHS}($\textit{hs}$))}
\label{alg:transhs}
\begin{algorithmic}[1]
\REQUIRE An elementary hybrid system $\textit{hs}$ as input
\ENSURE The resulting hybrid system is a PHS which simulates the input EHS
\STATE{Set $eqs$ to empty;}
\FOR {$mode$ \textbf{in} $hs$}
  \FOR {$exp$ \textbf{in} $\textbf{omExp}(mode.init)$}
    \STATE{($exp, eqs$) = \textbf{VT}($exp, eqs$);}
  \ENDFOR
  \FOR {$exp$ \textbf{in} $\textbf{omExp}(mode.domain)$}
    \STATE{($exp, eqs$) = \textbf{VT}($exp, eqs$);}
  \ENDFOR
  \FOR {$exp$ \textbf{in} $\textbf{omExp}(mode.guard)$}
    \STATE{($exp, eqs$) = \textbf{VT}($exp, eqs$);}
  \ENDFOR
  \FOR {$exp$ \textbf{in} $\textbf{omExp}(mode.reset)$}
    \STATE{($exp, eqs$) = \textbf{VT}($exp, eqs.expr$);}
  \ENDFOR
  \STATE{mode.odes = \textbf{TransEODEs}($\textit{odes}$, $\textit{eqs}$);}
\ENDFOR
\STATE{\textbf{return} $\textit{hs}$;}
\end{algorithmic}
\end{algorithm}

In Algorithm \ref{alg:transhs}, $\textbf{omExp}(\textit{form})$ returns the set of the outmost expressions of formula $\textit{form}$, and \textbf{VT} and \textbf{TransODEs} call Algorithm 1 and 3, respectively.

\end{document}